\documentclass[11pt]{article}
\usepackage{natbib}
\usepackage{amsmath,amssymb,amsthm}
\usepackage{graphicx}
\usepackage{booktabs}
\usepackage[margin=1in]{geometry}
\usepackage{setspace}
\newtheorem{proposition}{Proposition}
\newtheorem{corollary}{Corollary}
\newtheorem{remark}{Remark}

\begin{document}

\title{Bayesian Dynamic Gamma Models for Route-Level Travel Time Reliability}
\author{
Vadim~Sokolov\thanks{Department of Systems Engineering and Operations Research, George Mason University. Email: \texttt{vsokolov@gmu.edu}} \and
Refik~Soyer\thanks{Department of Decision Sciences, George Washington University. Email: \texttt{rsoyer@gwu.edu}}
}
\date{}
\maketitle

\begin{abstract}
Route-level travel time reliability requires characterizing the distribution of total travel time across correlated segments---a problem where existing methods either assume independence (fast but miscalibrated) or model dependence via copulas and simulation (accurate but expensive). We propose a conjugate Bayesian dynamic Gamma model with a common random environment that resolves this trade-off. Each segment's travel time follows a Gamma distribution conditional on a shared latent environment process that evolves as a Markov chain, inducing cross-segment dependence while preserving conditional independence. A moment-matching approximation yields a closed-form $F$-distribution for route travel time, from which the Planning Time Index, Buffer Index, and on-time probability are computed instantly---at the same $O(1)$ cost as independence-based methods. The conjugate structure ensures that Bayesian posterior updates and the full predictive distribution are available in closed form as new sensor data arrives. Applied to 16 sensors spanning 8.26 miles on I-55 in Chicago, the model achieves 95.4\% coverage of nominal 90\% predictive intervals versus 34--37\% for independence-based convolution, at identical computational cost.
\end{abstract}

\section{Introduction}\label{sec:intro}

Travel time reliability---the variability and predictability of travel times---is a primary concern for both travelers and transportation system operators. Reliability is distinct from average travel time: two routes with the same mean travel time may differ dramatically in their day-to-day variability, and research has consistently shown that travelers value reliability as much as, or more than, average travel time \citep{liu04,tal10}. The U.S.\ Federal Highway Administration has identified reliability as a key performance measure, noting that it is ``more important and meaningful to the public than other congestion measures such as level of service, speed, and volume'' \citep{FHWA2006}.

Quantifying travel time reliability requires moving beyond first-order moments to characterize the full distribution of travel times. This is particularly challenging at the route level, where the total travel time is the sum of correlated segment travel times, each with a non-standard distribution. The correlated link-to-route aggregation problem is a central methodological challenge \citep{FHWA2024}: existing methods that assume segment independence are computationally fast but dramatically underestimate route variance, while methods that model dependence (e.g., copulas) require expensive Monte Carlo simulation. The central question addressed in this paper is: \emph{how can we compute the predictive distribution of route-level travel time in a way that (i) accounts for cross-segment dependence, (ii) captures the dynamic, time-varying nature of traffic conditions, and (iii) yields closed-form reliability metrics suitable for real-time operations?}

We propose a dynamic Gamma model with a common random environment that simultaneously addresses all three requirements. The key modeling insight is that cross-segment dependence in travel times is driven primarily by shared conditions---congestion waves, weather, incidents---that affect an entire corridor. By representing these shared conditions through a single latent process $\eta_t$, the model induces realistic positive dependence across segments while maintaining conditional independence. This structure has a powerful computational consequence: the route travel time distribution reduces from an $m$-dimensional integration problem to a single one-dimensional integral, regardless of the number of segments. Combined with a Gamma moment-matching approximation, this yields a closed-form $F$-distribution for route travel time, from which all standard reliability metrics can be computed in $O(1)$ time---the same computational cost as methods that simply assume independence, but with correct accounting for cross-segment dependence.

Our main contributions are:
\begin{enumerate}
\item A multivariate dynamic Gamma model where a common random environment process induces cross-segment dependence while preserving analytical tractability. The model nests the multivariate Lomax distribution of \citet{Nayak87} and the dynamic extension of \citet{APS20} as special cases.
\item A moment-matching approximation that reduces the conditional distribution of route travel time (a sum of heterogeneous Gamma variates) to a single Gamma, preserving the conjugate structure and yielding the route predictive distribution in closed form via the $F$-distribution. This resolves the correlated link-to-route aggregation problem at the same $O(1)$ computational cost as independence-based convolution methods.
\item Closed-form, time-varying expressions for the Planning Time Index, Buffer Index, and on-time arrival probability that update sequentially as new data arrives.
\item An empirical comparison demonstrating that independence-based methods achieve only 34--37\% coverage of nominal 90\% predictive intervals due to variance underestimation, while our model achieves 95.4\% coverage at identical computational cost, and outperforms copula-based simulation in calibration.
\end{enumerate}

The rest of the paper is organized as follows. Section~\ref{sec:literature} reviews the literature on travel time distributions, route-level reliability computation, and dynamic Bayesian models in transportation. Section~\ref{sec:model} presents the dynamic Gamma model for a single segment. Section~\ref{sec:multivariate} develops the multivariate extension with common random environment. Section~\ref{sec:route} derives the main theoretical result: closed-form route-level reliability via moment matching. Section~\ref{sec:inference} discusses Bayesian inference. Section~\ref{sec:application} presents the empirical application to I-55 traffic data. Section~\ref{sec:conclusion} concludes with discussion and extensions.

\section{Literature Review}\label{sec:literature}

\subsection{Travel Time Distribution Modeling}

A prerequisite for reliability analysis is an appropriate distributional model for travel times. Early work assumed normally distributed travel times \citep{iida99}, which yields tractable route-level convolutions but fails to capture the positive support, right-skewness, and heavy tails that characterize empirical travel time data.

The empirical literature has documented that travel times are better described by skewed, positive-valued distributions. \citet{Susilawati2013} compare several parametric families---Gamma, lognormal, Burr, and log-logistic---for urban road travel times, finding that the lognormal and Burr distributions provide the best fits for longer links while the Gamma is competitive for shorter segments. \citet{Fosgerau2012} model travel time on an urban road using a lognormal distribution and relate its parameters to the value of travel time variability. The comprehensive review by \citet{Zang2022} catalogs the range of distributions that have been applied and emphasizes that no single parametric family dominates across all settings.

An important observation from this literature is the distinction between \emph{static} and \emph{dynamic} distributional models. Most existing work fits a single distribution to a pooled sample of travel times, implicitly assuming stationarity. However, travel time distributions evolve within a day (e.g., free-flow in the morning versus congestion in the afternoon) and across days (e.g., in response to incidents or weather). \citet{vanLint2008} argue that variance-based measures of unreliability are insufficient because the same variance can arise from very different distributional shapes, and advocate for characterizing the full time-varying distribution. The recent comprehensive review by \citet{ZangXu2022} reinforces this point, organizing the literature into a four-stage framework---characterization, evaluation, valuation, and assignment---and identifying the need for models that simultaneously handle non-stationarity and cross-segment dependence.

A further critical distinction is between models that operate at the \emph{segment level} versus those that model route-level travel time directly. Segment-level models are more fundamental: each road segment has distinct physical characteristics (length, capacity, merge/diverge geometry) that produce segment-specific distributional shapes. A model that captures these segment-level dynamics can be composed to produce route-level predictions for \emph{any} route, whereas a model fit directly to route data is tied to a single origin--destination pair. Our approach models each segment individually through a Gamma distribution with segment-specific rate parameter $\lambda_j$, and builds route-level reliability from these components.

\subsection{Route-Level Reliability and the Convolution Problem}

Computing the distribution of route travel time $S_t = \sum_{j=1}^m y_{jt}$ requires the convolution of the segment-level distributions. Under the assumption of normally distributed, independent segment travel times, the convolution is trivially normal \citep{iida99}. For more realistic distributions, the convolution does not have a closed form.

\citet{ng10} address this by evaluating the convolution integral numerically using the Fast Fourier Transform (FFT), applicable to any distribution with a known characteristic function. \citet{ng11} develop a distribution-free approach based on probability inequalities that avoids specifying a parametric family altogether but produces conservative bounds rather than exact distributions. \citet{nie12} apply FFT-based convolution in a practical route guidance system for the Chicago expressway network. \citet{Lei2014} model travel times on an urban expressway by fitting GEV and GPD distributions to segment-level travel time per unit distance and then convolving these distributions to obtain route-level reliability under varying levels of service. \citet{Ramezani2012} take an alternative approach, using Markov chains to model the state transitions of each link and assembling the path travel time distribution through convolution of link-specific distributions.

A recent FHWA methodology report \citep{FHWA2024} on estimating travel time distributions along user-defined paths highlights that constructing path distributions from link-level data while preserving correlations remains a central methodological challenge, particularly as agencies move from fixed-point sensor data to trajectory-based probe data.

All of these convolution-based approaches assume that segment travel times are \emph{independent}---an assumption that is widely acknowledged to be unrealistic. Adjacent segments on a route share common traffic conditions, and empirical evidence shows strong positive correlation between consecutive segment travel times \citep{Zockaie2014}. Ignoring this dependence understates the variance of route travel time and produces unreliable predictive intervals. \citet{Filipovska2021} address this by estimating path travel time distributions in stochastic, time-varying networks with generalized spatio-temporal correlations, using Monte Carlo simulation to convolve correlated link travel times and adaptively update route distributions en route. More recently, \citet{Kamphuis2025} develop a nonparametric estimator for the joint distribution of per-edge travel times in spatially dependent networks, with provable consistency guarantees; however, their Gaussian framework does not capture the skewness inherent in travel time data.

Copula-based models offer a natural framework for specifying cross-segment dependence separately from the marginal distributions. \citet{ChenYu2017} develop a copula-based approach for estimating travel time reliability on an urban arterial, fitting marginal distributions to individual segment travel times and then using copulas to model the cross-segment dependence structure. Their approach correctly identifies that each segment requires its own distributional model and that cross-segment dependence matters; however, computing the distribution of the route sum $S_t$ under their copula model requires $m$-dimensional numerical integration or Monte Carlo simulation, which scales poorly with the number of segments.

At the network level, \citet{Saedi2020} use network partitioning to decompose a large heterogeneous network into homogeneous sub-networks, estimating travel time reliability within each partition using the linear mean--standard deviation relationship. Their approach offers a practical way to handle network heterogeneity but relies on simulation-based estimation and does not provide a closed-form route-level distribution.

\citet{Westgate2016} develop a large-network travel time distribution model for ambulance routing that uses a log-Gaussian Cox process to capture spatial and temporal dependence, but the model requires simulation-based inference that is computationally intensive. Simulation-based approaches are effective for offline analyses but are poorly suited for real-time reliability computation where metrics must be updated as each new data point arrives.

Our approach resolves the convolution problem through a different mechanism: instead of modeling dependence through a copula and then integrating over the joint distribution, we use a common random environment that induces dependence \emph{implicitly}. The resulting conditional independence structure means the route distribution requires only a one-dimensional integral over the environment, regardless of the number of segments. Unlike the convolution approaches of \citet{Lei2014} and \citet{Ramezani2012}, our method naturally accommodates cross-segment dependence; unlike the copula approach of \citet{ChenYu2017}, it avoids $m$-dimensional integration.

\subsection{Dynamic and Bayesian Models for Traffic}

Dynamic models that allow parameters to evolve over time are natural for traffic data, where conditions change rapidly. The Bayesian dynamic linear model framework of \citet{WestHarrison1997} provides a general approach, and \citet{Tebaldi2002} apply it to freeway traffic flow data from the Research Triangle region of North Carolina.

A prominent approach to capturing time-varying reliability is the GARCH (Generalized Autoregressive Conditional Heteroscedasticity) family, adapted from financial econometrics. \citet{Yang2018} develop an ARIMA-GARCH model that produces time-varying confidence intervals for arterial travel time, with intervals that widen during congestion and narrow during free-flow. The GARCH framework explicitly models volatility clustering---the phenomenon that high-variance periods tend to follow other high-variance periods---which is a natural feature of traffic dynamics. However, GARCH models operate on a single aggregate time series (typically route-level) and do not provide segment-level distributional models or a mechanism for composing route distributions from components. In contrast, our discount factor $\gamma$ plays an analogous role to the GARCH persistence parameter, but operates within a conjugate Bayesian framework that yields closed-form posterior updates and a full predictive distribution, not just confidence intervals.

Particle filtering methods have been applied to traffic state estimation by \citet{Mihaylova2007}, who develop a particle filter for freeway traffic flow within a macroscopic traffic model. \citet{PolsonSokolov2015} develop Bayesian particle tracking methods for traffic flows that combine physical traffic models with statistical estimation. The particle learning framework of \citet{Nick2010} provides a general methodology for sequential Bayesian updating with sufficient statistics that is directly applicable to our model. \citet{KimYe2022} develop a Bayesian mixture model for freeway travel time estimation under sparse probe data, using data-driven priors from historical neighbors; their Gibbs sampling approach yields credible intervals but requires MCMC computation and does not extend to route-level aggregation.

More recently, deep learning methods have achieved strong performance on traffic prediction tasks. \citet{PolsonSokolov2017} demonstrate that deep neural networks can capture complex non-linear patterns in short-term traffic flow prediction. However, deep learning approaches typically produce point predictions or require additional calibration to produce reliable uncertainty estimates. They do not directly yield the \emph{distributional} outputs (CDFs, quantiles, predictive intervals) that reliability analysis requires.

Our model occupies a distinct position in this landscape: for \emph{fixed} hyperparameters $(\alpha, \gamma, \lambda)$, it is a conjugate Bayesian model that provides exact sequential updating of the environment $\eta_t$ and yields the full predictive distribution in closed form. When the hyperparameters are themselves unknown, particle learning methods can be used for joint online estimation (Section~\ref{sec:inference}). The conjugacy is preserved even at the route level through the moment-matching approximation. As noted in the recent review by \citet{ZangXu2022}, the gap between distributional modeling and real-time computation remains a key challenge for the field; our closed-form approach directly addresses this gap.

\subsection{Common Random Environment Models}

The idea of using a shared latent variable to induce dependence among conditionally independent components has a long history in reliability theory. \citet{LS86} introduce the concept of a ``common environment'' shared by multiple components, showing that a Gamma-distributed random environment induces multivariate Lomax distributions for the component lifetimes. \citet{Nayak87} develops the statistical properties of the resulting multivariate Lomax distribution, including its moments and dependence structure. \citet{Singpurwalla1995} extends the framework to dynamic environments where the operating conditions change over time, and \citet{Hougaard2000} provides a comprehensive treatment of related frailty models in survival analysis.

\citet{ASX13} extend the Lindley--Singpurwalla framework to a dynamic Bayesian setting where the latent environment follows a beta-driven Markov process, applying the model to mortgage default counts. \citet{APS20} develop a general family of multivariate non-Gaussian time series models that nests the Gamma, Poisson, and Binomial cases; the Gamma member of this family yields the dynamic multivariate beta prime distribution that we adopt here. \citet{APS20} also provide the FFBS backward sampling and MCMC inference framework used in Section~\ref{sec:inference}. \citet{SAK15} use the common random environment structure to model dependence in count data time series.

Our paper extends this line of work in a specific and practically important direction: we show that the common environment structure, combined with a moment-matching approximation for the route-level sum, yields a \emph{closed-form} route travel time distribution. This result transforms the common random environment model from a framework for segment-level modeling into a complete tool for route-level reliability analysis.

\subsection{Research Gap and Contribution}

Table~\ref{tab:positioning} summarizes how existing approaches address the key requirements for route-level reliability analysis. Convolution-based methods \citep{ng10,Lei2014,Ramezani2012} build route distributions from segment models but assume independence and use static distributions. The copula approach of \citet{ChenYu2017} correctly models segment-level distributions and cross-segment dependence, but requires $m$-dimensional simulation to compute route reliability and does not accommodate dynamic updating. Network-level methods \citep{Saedi2020} bypass segment-level modeling entirely, sacrificing the ability to compose reliability estimates for arbitrary routes. No prior method simultaneously provides (i) segment-level distributional models, (ii) dynamic updating that tracks time-varying conditions, (iii) a multivariate structure that captures cross-segment dependence, and (iv) closed-form route-level reliability metrics. Our framework fills this gap.

\begin{table}[htbp]
\centering
\caption{Positioning of the proposed framework relative to existing approaches for route-level travel time reliability.}
\label{tab:positioning}
\small
\begin{tabular}{lccc}
\toprule
Approach & Dynamic & Dependence & Closed-form \\
\midrule
Normal convolution \citep{iida99} & No & No & Yes \\
FFT convolution \citep{ng10} & No & No & Numerical \\
GEV/GPD convolution \citep{Lei2014} & No & No & Numerical \\
Markov chain \citep{Ramezani2012} & No & No & Numerical \\
Copula \citep{ChenYu2017} & No & Yes & Simulation \\
Correlated MCS \citep{Filipovska2021} & Yes & Yes & Simulation \\
Network partition \citep{Saedi2020} & No & Implicit & Simulation \\
ARIMA-GARCH \citep{Yang2018} & Yes & No & Intervals \\
Deep learning \citep{PolsonSokolov2017} & Yes & Implicit & No \\
Simulation \citep{Westgate2016} & Yes & Yes & Simulation \\
\textbf{This paper} & \textbf{Yes} & \textbf{Yes} & \textbf{Yes} \\
\bottomrule
\end{tabular}
\end{table}

\section{Dynamic Gamma Model}\label{sec:model}

\subsection{Observation Model}

There are multiple sources of uncertainty about travel times on a given road segment. These include changes in demand due to daily variations or special events \citep{clark2005modelling}, lane closures due to accidents and road works \citep{chen2003travel}, and degradation due to weather \citep{tu2007impact}. We decompose these into observable covariates and a latent environment component.

We assume that travel time $y_t$ on a road segment at period $t$ follows a Gamma distribution with shape $\alpha$ and rate $\eta_t e^{\beta^T u_t}$, where $u_t$ is an observable covariate vector representing environmental variables and $\eta_t$ is the latent environment at time $t$:
\begin{equation}
y_t \mid \alpha,\eta_t,\beta,u_t \sim \mathrm{Gam}(\alpha,\, \eta_t e^{\beta^T u_t}).
\label{eq:obsmodel}
\end{equation}
Here we use the rate parameterization, so the density is
\[
p(y_t \mid \alpha,\eta_t,\beta,u_t) = \frac{(\eta_t e^{\beta^T u_t})^\alpha}{\Gamma(\alpha)}\, y_t^{\alpha-1}\, e^{-\eta_t e^{\beta^T u_t} y_t}.
\]
The mean travel time is $E[y_t \mid \eta_t] = \alpha / (\eta_t e^{\beta^T u_t})$, which is inversely proportional to the environment quality $\eta_t$: higher values of $\eta_t$ correspond to better conditions (lower travel times), while lower values indicate congestion.

The Gamma distribution is a natural choice for travel time modeling. It is supported on $(0,\infty)$, accommodates right-skewness, and nests the exponential distribution ($\alpha = 1$) as a special case. Empirical studies have confirmed that the Gamma provides a good fit to segment-level travel time data \citep{Susilawati2013,Zang2022}.

Conditional on $\alpha$ and $\eta_t e^{\beta^T u_t}$, travel times at different periods are assumed independent. We define $D^{t-1} = (y^{t-1}, u^{t-1})$ where $y^{t-1} = (y_1,\ldots,y_{t-1})$ and $u^{t-1} = (u_1,\ldots,u_{t-1})$.

\subsection{Dynamic Random Environment}\label{sec:environment}

The latent environment $\eta_t$ evolves according to a Markovian process. Given the history $D^{t-1}$, we specify
\begin{equation}
\eta_t = \frac{\eta_{t-1}}{\gamma}\,\epsilon_t,
\label{eq:markov}
\end{equation}
where $\epsilon_t \mid \alpha,\gamma,\beta,D^{t-1} \sim \mathrm{Beta}[\gamma a_{t-1},\, (1-\gamma)a_{t-1}]$ and $0 < \gamma < 1$ is a discount factor. This evolution has its origins in the dynamic environment models of \citet{Singpurwalla1995} and has been used in reliability growth testing \citep{LS86} and in modeling mortgage default rates \citep{ASX13}.

The discount factor $\gamma$ controls how quickly the model adapts to new information. Small values of $\gamma$ (rapid discounting) make the model responsive to recent observations, while values close to 1 give more weight to historical data. The evolution implies a stochastic ordering $\eta_t < \eta_{t-1}/\gamma$ and has a random walk property:
\[
E(\eta_t \mid \eta_{t-1}, D^{t-1}) = \eta_{t-1},
\]
where $(\eta_t \mid \eta_{t-1}, D^{t-1})$ is a scaled Beta density over $(0, \eta_{t-1}/\gamma)$.

\subsection{Sequential Bayesian Updating}

A key feature of the model is that the posterior distribution of $\eta_t$ remains in the Gamma family, enabling exact sequential updating.

\begin{proposition}\label{prop:updating}
If $(\eta_0 \mid D^0) \sim \mathrm{Gam}(a_0, b_0)$, then for all $t \geq 1$:
\begin{enumerate}
\item[(i)] The posterior at time $t-1$ is $(\eta_{t-1} \mid D^{t-1}) \sim \mathrm{Gam}(a_{t-1}, b_{t-1})$.
\item[(ii)] The prior at time $t$ is $(\eta_t \mid D^{t-1}) \sim \mathrm{Gam}(\gamma a_{t-1}, \gamma b_{t-1})$.
\item[(iii)] After observing $y_t$, the posterior updates to $(\eta_t \mid D^t) \sim \mathrm{Gam}(a_t, b_t)$ where
\begin{equation}
a_t = \gamma a_{t-1} + \alpha, \qquad b_t = \gamma b_{t-1} + y_t e^{\beta^T u_t}.
\label{eq:update}
\end{equation}
\end{enumerate}
\end{proposition}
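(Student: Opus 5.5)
The plan is induction on $t$, with each step built from two standard facts. The first is that Beta-scaling a Gamma variable gives a Gamma variable. The second is Gamma--Gamma conjugacy under the rate parameterization.

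For the base case, (i) at $t=1$ is exactly the hypothesis $(\eta_0\mid D^0)\sim\mathrm{Gam}(a_0,b_0)$. For the inductive step, assume (i) at time $t-1$; we show (ii) and (iii) at time $t$, which gives (i) at time $t$ and closes the induction.

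For (ii), write $\eta_t=\eta_{t-1}\epsilon_t/\gamma$. Under $D^{t-1}$, the factor $\epsilon_t\sim\mathrm{Beta}[\gamma a_{t-1},(1-\gamma)a_{t-1}]$ is independent of $\eta_{t-1}\sim\mathrm{Gam}(a_{t-1},b_{t-1})$. The key lemma is this: if $X\sim\mathrm{Gam}(a,b)$ and $B\sim\mathrm{Beta}(\gamma a,(1-\gamma)a)$ are independent, then $XB\sim\mathrm{Gam}(\gamma a,b)$.

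I would prove the lemma via the beta--gamma algebra. Take independent $G_1\sim\mathrm{Gam}(\gamma a,b)$ and $G_2\sim\mathrm{Gam}((1-\gamma)a,b)$. Then $G_1+G_2\sim\mathrm{Gam}(a,b)$, and $G_1/(G_1+G_2)\sim\mathrm{Beta}(\gamma a,(1-\gamma)a)$. Moreover the ratio is independent of the sum, by the standard Lukacs change of variables $(G_1,G_2)\mapsto(S,R)$, whose Jacobian is $S$. Hence $XB\overset{d}{=}G_1$. Alternatively, one can match moments $E[(XB)^k]=\frac{\Gamma(a+k)}{b^k\Gamma(a)}\cdot\frac{\Gamma(\gamma a+k)\Gamma(a)}{\Gamma(\gamma a)\Gamma(a+k)}$, which is the Gamma$(\gamma a,b)$ moment sequence; this is moment-determinate.

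Dividing by $\gamma$ multiplies the rate by $\gamma$, so $\eta_t\mid D^{t-1}\sim\mathrm{Gam}(\gamma a_{t-1},\gamma b_{t-1})$. The one point needing care here is the independence of $\epsilon_t$ and $\eta_{t-1}$ given $D^{t-1}$. The specification makes the Beta law depend on $D^{t-1}$ only through $a_{t-1}$, so I will read it as conditional independence. This step is the main obstacle; everything else is routine.

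For (iii), apply Bayes' theorem with the likelihood \eqref{eq:obsmodel}. Here $y_t$ depends on the past only through $\eta_t$, by the conditional independence stated in the observation model, and $u_t$ is treated as known. The posterior is then proportional to
\[
\eta_t^{\gamma a_{t-1}-1}e^{-\gamma b_{t-1}\eta_t}\cdot\eta_t^{\alpha}e^{-\eta_t y_t e^{\beta^T u_t}}.
\]
This is the kernel of $\mathrm{Gam}(\gamma a_{t-1}+\alpha,\ \gamma b_{t-1}+y_te^{\beta^Tu_t})$, which yields \eqref{eq:update}.
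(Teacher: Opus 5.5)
Your proof is correct and follows the same route as the paper: part (ii) comes from the Markov evolution \eqref{eq:markov}, and part (iii) comes from multiplying the Gamma prior and likelihood kernels. The paper simply cites \citet{ASX13} for (ii), whereas you supply the beta--gamma lemma, make the induction explicit, and state the conditional independence of $\epsilon_t$ and $\eta_{t-1}$ given $D^{t-1}$ that the citation relies on.
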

\begin{proof}
Part~(ii) follows from the Markov evolution~\eqref{eq:markov} \citep{ASX13}. For~(iii), the Gamma likelihood is conjugate to the Gamma prior in~(ii); multiplying the kernels gives shape $\gamma a_{t-1} + \alpha$ and rate $\gamma b_{t-1} + y_t e^{\beta^T u_t}$.
\end{proof}

The updating is immediate: the conjugate Gamma prior for $\eta_t$ combined with the Gamma likelihood yields a Gamma posterior. The discount factor $\gamma$ systematically inflates the variance of the prior relative to the posterior---statement (ii) shows that the prior variance $\gamma a_{t-1} / (\gamma b_{t-1})^2 = a_{t-1}/(\gamma b_{t-1}^2)$ exceeds the posterior variance $a_{t-1}/b_{t-1}^2$ by a factor of $1/\gamma$.

\begin{corollary}[Stochastic properties]
The environment process satisfies:
\[
E(\eta_t \mid D^{t-1}) = E(\eta_{t-1} \mid D^{t-1}), \qquad V(\eta_t \mid D^{t-1}) > V(\eta_{t-1} \mid D^{t-1}).
\]
That is, the predictive mean of the environment is preserved, but uncertainty increases between updates.
\end{corollary}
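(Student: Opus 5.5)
The corollary is a direct computation from Proposition~\ref{prop:updating}. We read off the two distributions it provides and compare their first two moments.

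\textbf{Mean.} By part~(i), $(\eta_{t-1}\mid D^{t-1})\sim\mathrm{Gam}(a_{t-1},b_{t-1})$, so
\[
E(\eta_{t-1}\mid D^{t-1}) = \frac{a_{t-1}}{b_{t-1}}.
\]
By part~(ii), $(\eta_t\mid D^{t-1})\sim\mathrm{Gam}(\gamma a_{t-1},\gamma b_{t-1})$, whose mean is
\[
\frac{\gamma a_{t-1}}{\gamma b_{t-1}} = \frac{a_{t-1}}{b_{t-1}}.
\]
The two means coincide. As a cross-check that does not rely on the Gamma form of (ii), the tower property applied to the random-walk identity $E(\eta_t\mid\eta_{t-1},D^{t-1})=\eta_{t-1}$ gives the same equality.

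\textbf{Variance.} The posterior variance is $a_{t-1}/b_{t-1}^2$. The prior variance at time $t$ is
\[
\frac{\gamma a_{t-1}}{(\gamma b_{t-1})^2} = \frac{1}{\gamma}\cdot\frac{a_{t-1}}{b_{t-1}^2}.
\]
Since $0<\gamma<1$, we have $1/\gamma>1$. Moreover $a_{t-1},b_{t-1}>0$: this holds for $t=1$ by the prior hypothesis, and is preserved by the updates~\eqref{eq:update} because $\alpha>0$ and $y_t e^{\beta^T u_t}>0$. Hence $a_{t-1}/b_{t-1}^2>0$, and multiplying by a factor strictly greater than one gives the strict inequality.

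\textbf{Alternative route and main obstacle.} One could instead use the law of total variance:
\[
V(\eta_t\mid D^{t-1}) = E\bigl[V(\eta_t\mid\eta_{t-1})\bigr] + V(\eta_{t-1}\mid D^{t-1}).
\]
The strict inequality then requires showing that the conditional scaled-Beta variance of~\eqref{eq:markov} is positive. That variance depends on $a_{t-1}$, which enters through the Beta parameters. This is the only point needing care: the Markov evolution is specified conditional on $D^{t-1}$, so $a_{t-1}$ is data-determined and the Beta parameters are nondegenerate. I would nonetheless rely on the Gamma-moment comparison above, since part~(ii) already encodes this.
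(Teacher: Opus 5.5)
Your proposal is correct and is essentially the paper's argument: the paper likewise reads the moments off the Gamma forms in parts (i) and (ii) of Proposition~\ref{prop:updating}. It notes that $\gamma a_{t-1}/(\gamma b_{t-1}) = a_{t-1}/b_{t-1}$, and that the prior variance $a_{t-1}/(\gamma b_{t-1}^2)$ exceeds the posterior variance $a_{t-1}/b_{t-1}^2$ by the factor $1/\gamma>1$. Your tower-property cross-check matches the random-walk property $E(\eta_t\mid\eta_{t-1},D^{t-1})=\eta_{t-1}$ that the paper states before the proposition. Your law-of-total-variance remark is also consistent, since it reproduces $a_{t-1}/(\gamma b_{t-1}^2)$ exactly.
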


\subsection{Predictive Distribution}

The one-step-ahead predictive distribution of $y_t$ is obtained by integrating out the environment:
\begin{equation}
p(y_t \mid \alpha,\gamma,\beta,D^{t-1}) = \int_0^\infty p(y_t \mid \alpha,\eta_t,\beta,u_t)\,p(\eta_t \mid D^{t-1})\,d\eta_t.
\label{eq:predictive_integral}
\end{equation}
This yields a compound Gamma distribution:
\begin{equation}
p(y_t \mid D^{t-1}) = \frac{\Gamma(\alpha + \gamma a_{t-1})}{\Gamma(\alpha)\,\Gamma(\gamma a_{t-1})}\, \frac{(e^{\beta^T u_t})^{\alpha}\,y_t^{\alpha-1}\,(\gamma b_{t-1})^{\gamma a_{t-1}}}{(\gamma b_{t-1} + y_t e^{\beta^T u_t})^{\alpha + \gamma a_{t-1}}},
\label{eq:predictive}
\end{equation}
which is a Beta prime (compound Gamma) distribution, reducing to the Lomax (Pareto Type~II) distribution when $\alpha = 1$. An important consequence is that the predictive quantiles are available in closed form via the $F$-distribution: if we write $\tilde{a} = \gamma a_{t-1}$ and $\tilde{b} = \gamma b_{t-1}$, then
\begin{equation}
\frac{\tilde{a}}{\alpha}\cdot\frac{y_t e^{\beta^T u_t}}{\tilde{b}} \;\sim\; F(2\alpha,\, 2\tilde{a}).
\label{eq:F_relation}
\end{equation}
This means that the predictive CDF, quantiles, and moments can all be computed using standard $F$-distribution functions---no numerical integration or simulation is required.

\begin{remark}
Suppressing covariates (i.e., setting $\beta = 0$) for notational clarity, the predictive mean and variance are
\[
E[y_t \mid D^{t-1}] = \frac{\alpha\,\tilde{b}}{\tilde{a} - 1}\quad (\tilde{a}>1), \qquad
V[y_t \mid D^{t-1}] = \frac{\alpha\,\tilde{b}^2(\alpha + \tilde{a}-1)}{(\tilde{a}-1)^2(\tilde{a}-2)}\quad (\tilde{a}>2).
\]
When covariates are present, both expressions are scaled by $e^{-\beta^T u_t}$ and $e^{-2\beta^T u_t}$, respectively. The heavier tails relative to the Gamma distribution (the predictive variance is inflated by the factor $(\alpha + \tilde{a}-1)/(\tilde{a}-2)$) reflect the additional uncertainty about the environment.
\end{remark}

\section{Multivariate Extension: Common Random Environment}\label{sec:multivariate}

\subsection{Model Specification}

Consider a route composed of $m$ road segments. The travel time on segment $j$ at time $t$ follows
\begin{equation}
y_{jt} \mid \lambda_j,\alpha,\eta_t,\beta,u_t \sim \mathrm{Gam}(\alpha,\, \lambda_j \eta_t e^{\beta^T u_t}),
\label{eq:multivariate}
\end{equation}
where $\lambda_j > 0$ is a segment-specific parameter capturing the relative speed of segment $j$, and $\eta_t$ is the common random environment shared across all segments. Segments with higher $\lambda_j$ have lower expected travel times: $E[y_{jt} \mid \eta_t] = \alpha / (\lambda_j \eta_t e^{\beta^T u_t})$.

The critical modeling assumption is \textbf{conditional independence}: given $\eta_t$ and the $\lambda_j$'s, the segment travel times $y_{1t}, \ldots, y_{mt}$ are independent. The common environment $\eta_t$ induces all cross-segment dependence. This is physically motivated: the dominant source of correlation between adjacent segments is shared network conditions (congestion waves, weather, incidents), which are exactly what $\eta_t$ captures.

This dependence structure, where a common latent variable induces correlation among conditionally independent observations, has its origins in the shared environment models of \citet{LS86} and \citet{Nayak87}, and has been applied to credit risk modeling \citep{ASX13} and count data time series \citep{SAK15}.

An important structural consequence is that the predictive correlation between any two segments $i \neq j$ takes the form $\mathrm{Corr}(y_{it}, y_{jt} \mid D^{t-1}) = \alpha / (\alpha + \tilde{a} - 1)$, which depends on $\alpha$ and $\tilde{a}$ but not on $i$ or $j$. That is, the model implies \emph{equicorrelation}: all segment pairs share the same correlation, regardless of their spatial separation. Empirically, adjacent segments tend to exhibit higher correlation than distant ones (Section~\ref{sec:application}). Nevertheless, the model captures the dominant first-order effect---pervasive positive dependence driven by shared conditions---and the route-level aggregation (which sums over all pairs) is governed by the average correlation rather than its spatial structure. We revisit this limitation in Section~\ref{sec:conclusion}.

\subsection{Sequential Updating}

All results from the univariate model carry over with modified recursions. The posterior of $\eta_t$ given all $m$ segments updates as
\begin{equation}
(\eta_t \mid \lambda, \alpha, \gamma, \beta, D^t) \sim \mathrm{Gam}(a_t, b_t),
\label{eq:mv_posterior}
\end{equation}
where
\[
a_t = \gamma a_{t-1} + m\alpha, \qquad b_t = \gamma b_{t-1} + e^{\beta^T u_t}\sum_{j=1}^m \lambda_j y_{jt}.
\]
The common environment learns simultaneously from all $m$ segments, weighted by their $\lambda_j$ values. With $m$ segments each contributing $\alpha$ to the shape parameter, the effective information per time step is $m\alpha$, leading to tighter posteriors and more precise reliability estimates.

\begin{remark}[Identifiability]
The parameters $\lambda_j$ and $\eta_t$ enter the likelihood only through their product $\lambda_j \eta_t$: rescaling $\lambda_j \to c\lambda_j$ and $\eta_t \to \eta_t/c$ leaves the likelihood invariant. In the Bayesian framework, the Gamma prior $(\eta_0 \mid D^0) \sim \mathrm{Gam}(a_0, b_0)$ anchors the scale of $\eta_t$; the $\lambda_j$'s then represent \emph{relative} segment rates. In practice, we normalize $\lambda_j$ so that $\bar\lambda = 1$, resolving the scale ambiguity (see Section~\ref{sec:application}).
\end{remark}

\subsection{Joint Predictive Distribution}

The joint predictive distribution of segment travel times is obtained by integrating out the environment. For notation simplicity, we suppress the covariates $u_t$ and write $\tilde{a} = \gamma a_{t-1}$, $\tilde{b} = \gamma b_{t-1}$.

\begin{proposition}[Multivariate compound Gamma]\label{prop:joint}
The joint predictive density is
\begin{equation}
p(y_{1t},\ldots,y_{mt} \mid D^{t-1}) = \frac{\Gamma(m\alpha + \tilde{a})}{\Gamma(\alpha)^m\,\Gamma(\tilde{a})}\,
\frac{\prod_{j=1}^m \lambda_j^\alpha\, y_{jt}^{\alpha-1}\;\cdot\; \tilde{b}^{\tilde{a}}}
{\left(\tilde{b} + \sum_{j=1}^m \lambda_j y_{jt}\right)^{m\alpha + \tilde{a}}}.
\label{eq:joint_predictive}
\end{equation}
\end{proposition}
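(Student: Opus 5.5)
We integrate the environment out of the conditionally independent segment densities, exactly as in the univariate predictive integral~\eqref{eq:predictive_integral}. By Proposition~\ref{prop:updating}(ii), carried over to the multivariate recursions~\eqref{eq:mv_posterior}, the one-step prior is $(\eta_t \mid D^{t-1}) \sim \mathrm{Gam}(\tilde a, \tilde b)$ with $\tilde a = \gamma a_{t-1}$ and $\tilde b = \gamma b_{t-1}$. The Markov evolution~\eqref{eq:markov} does not involve the segments, so this step is unchanged from the univariate case.

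Conditional independence given $\eta_t$ under~\eqref{eq:multivariate}, with covariates suppressed, gives the joint conditional density as a product:
\[
\prod_{j=1}^m \frac{(\lambda_j\eta_t)^\alpha}{\Gamma(\alpha)}\, y_{jt}^{\alpha-1} e^{-\lambda_j\eta_t y_{jt}}
= \frac{\prod_j \lambda_j^\alpha y_{jt}^{\alpha-1}}{\Gamma(\alpha)^m}\,\eta_t^{m\alpha}\exp\Bigl(-\eta_t\sum_j \lambda_j y_{jt}\Bigr).
\]
The key observation is that the data enter the $\eta_t$-dependent part only through the sufficient statistic $\sum_j \lambda_j y_{jt}$. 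This is the same fact that yields the rate update in~\eqref{eq:mv_posterior}.

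Multiplying by the Gamma prior density $\tilde b^{\tilde a}\eta_t^{\tilde a-1}e^{-\tilde b\eta_t}/\Gamma(\tilde a)$ leaves an integrand proportional to $\eta_t^{m\alpha+\tilde a-1}\exp\bigl(-\eta_t(\tilde b+\sum_j\lambda_j y_{jt})\bigr)$. We evaluate it with the standard Gamma integral
\[
\int_0^\infty \eta^{s-1}e^{-r\eta}\,d\eta=\Gamma(s)/r^s,
\]
taking $s = m\alpha+\tilde a$ and $r=\tilde b+\sum_j \lambda_j y_{jt}$. Collecting the constants $\Gamma(\alpha)^{-m}\Gamma(\tilde a)^{-1}\tilde b^{\tilde a}\prod_j\lambda_j^\alpha y_{jt}^{\alpha-1}$ then gives~\eqref{eq:joint_predictive} directly.

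There is no real obstacle, since the argument is a direct calculation. The only points needing care are the following:
\begin{enumerate}
\item Interchanging the product over $j$ with the $\eta_t$ integral is justified by Tonelli, because all terms are nonnegative.
\item The integral converges because $s>0$ and $r>0$.
\item As checks, setting $m=1$, $\lambda_1=1$ should recover~\eqref{eq:predictive}, and setting $\alpha=1$ should recover the multivariate Lomax form of \citet{Nayak87}.
\item With covariates, each $\lambda_j$ is replaced by $\lambda_j e^{\beta^T u_t}$ throughout.
\end{enumerate}
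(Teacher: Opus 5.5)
Your proposal is correct and follows the paper's own proof. Both multiply the conditionally independent Gamma densities by the $\mathrm{Gam}(\tilde a,\tilde b)$ prior and integrate the resulting kernel of $\mathrm{Gam}(m\alpha+\tilde a,\,\tilde b+\sum_j\lambda_j y_{jt})$; you simply write out the normalizing-constant bookkeeping that the paper leaves implicit (your Tonelli remark is harmless but unnecessary, since the product over $j$ already sits inside the single $\eta_t$ integral and nothing is interchanged).
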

\begin{proof}
The integral $\int_0^\infty \prod_j p(y_{jt} \mid \eta_t) \cdot p(\eta_t \mid D^{t-1})\,d\eta_t$ yields the product of Gamma densities times the Gamma prior, integrated as a kernel of $\mathrm{Gam}(m\alpha + \tilde{a},\, \tilde{b} + \sum_j \lambda_j y_{jt})$.
\end{proof}

When $\alpha = 1$ (exponential observation model), equation~\eqref{eq:joint_predictive} reduces to the dynamic multivariate Lomax distribution studied by \citet{Nayak87} in the static case and by \citet{APS20} in the dynamic setting. For general $\alpha$, it is a multivariate compound Gamma distribution.

\section{Route-Level Travel Time Reliability}\label{sec:route}

This section presents the main theoretical contribution of the paper: a closed-form route predictive distribution that enables instant computation of reliability metrics.

\subsection{Dimension Reduction via Conditional Independence}

The route travel time is $S_t = \sum_{j=1}^m y_{jt}$. The predictive distribution of $S_t$ is
\begin{equation}
P(S_t \leq \tau \mid D^{t-1}) = \int_0^\infty P(S_t \leq \tau \mid \eta_t)\, p(\eta_t \mid D^{t-1})\,d\eta_t.
\label{eq:route_integral}
\end{equation}
The conditional independence structure means that $P(S_t \leq \tau \mid \eta_t)$ involves the distribution of a sum of \emph{independent} Gamma variates---this is a one-dimensional CDF that can be computed efficiently. The outer integral over $\eta_t$ is also one-dimensional.

The computational advantage is evident when compared with alternatives. If segments are independent but non-normally distributed, computing $P(S_t \leq \tau)$ requires $m$-fold convolution of the marginal distributions \citep{ng10}, evaluated numerically via FFT. Copula models fare worse: $P(S_t \leq \tau)$ requires $m$-dimensional numerical integration or simulation. By contrast, our common environment formulation reduces the problem to a single one-dimensional integral over $\eta_t$, regardless of $m$.

\subsection{Gamma Moment-Matching Approximation}

Conditional on $\eta_t$, the segment travel times are independent with $y_{jt} \mid \eta_t \sim \mathrm{Gam}(\alpha, \lambda_j \eta_t)$. The conditional sum $S_t \mid \eta_t$ has moments
\[
E[S_t \mid \eta_t] = \frac{\alpha}{\eta_t}\sum_{j=1}^m \frac{1}{\lambda_j}, \qquad
V[S_t \mid \eta_t] = \frac{\alpha}{\eta_t^2}\sum_{j=1}^m \frac{1}{\lambda_j^2}.
\]
We approximate $S_t \mid \eta_t$ by a single Gamma distribution matched to these moments:
\begin{equation}
S_t \mid \eta_t \;\approx\; \mathrm{Gam}(\alpha^*,\, c\,\eta_t),
\label{eq:moment_match}
\end{equation}
where
\begin{equation}
\alpha^* = \alpha \cdot \frac{\left(\sum_{j=1}^m \lambda_j^{-1}\right)^2}{\sum_{j=1}^m \lambda_j^{-2}}, \qquad
c = \frac{\sum_{j=1}^m \lambda_j^{-1}}{\sum_{j=1}^m \lambda_j^{-2}}.
\label{eq:alpha_star}
\end{equation}

Two important properties of this approximation:
\begin{itemize}
\item The effective route shape $\alpha^*$ does not depend on $\eta_t$. It depends only on the segment parameters $\lambda_j$ and the per-segment shape $\alpha$.
\item The effective route rate $c\,\eta_t$ is linear in $\eta_t$.
\end{itemize}
These properties are crucial because they preserve the conjugate structure of the model, enabling closed-form integration over $\eta_t$.

\begin{remark}
When all $\lambda_j$ are equal (homogeneous segments), $\alpha^* = m\alpha$ and $c = \lambda$, so the approximation is exact: the sum of i.i.d.\ $\mathrm{Gam}(\alpha, \lambda\eta_t)$ variables is exactly $\mathrm{Gam}(m\alpha, \lambda\eta_t)$.
More generally, the Gamma approximation to a sum of independent Gamma variates is known to be accurate when the individual shape parameters are moderate \citep{MoschouSingle}.
\end{remark}

\subsection{Closed-Form Route Predictive Distribution}

Substituting the moment-matched approximation~\eqref{eq:moment_match} into the route integral~\eqref{eq:route_integral} yields the same compound Gamma structure as the univariate case.

\begin{proposition}[Route predictive distribution]\label{prop:route}
Under the Gamma moment-matching approximation, the predictive distribution of route travel time is
\begin{equation}
p(S_t \mid D^{t-1}) = \frac{\Gamma(\alpha^* + \tilde{a})}{\Gamma(\alpha^*)\,\Gamma(\tilde{a})}\,
\frac{c^{\alpha^*}\, S_t^{\alpha^*-1}\, \tilde{b}^{\tilde{a}}}{(c\,S_t + \tilde{b})^{\alpha^* + \tilde{a}}},
\label{eq:route_predictive}
\end{equation}
where $\tilde{a} = \gamma a_{t-1}$ and $\tilde{b} = \gamma b_{t-1}$ are the prior parameters from the multivariate filter. Equivalently,
\begin{equation}
\frac{\tilde{a}}{\alpha^*}\cdot\frac{c\, S_t}{\tilde{b}} \;\sim\; F(2\alpha^*,\, 2\tilde{a}).
\label{eq:route_F}
\end{equation}
\end{proposition}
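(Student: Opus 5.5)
The plan is to follow the route of Proposition~\ref{prop:joint} and the univariate predictive~\eqref{eq:predictive}, now with the single moment-matched Gamma~\eqref{eq:moment_match} as the observation density. I start from the one-dimensional mixture~\eqref{eq:route_integral}, written at the density level:
\[
p(S_t \mid D^{t-1}) = \int_0^\infty p(S_t \mid \eta_t)\, p(\eta_t \mid D^{t-1})\, d\eta_t .
\]
Here $p(S_t\mid\eta_t)$ is replaced by the $\mathrm{Gam}(\alpha^*, c\,\eta_t)$ density. By~\eqref{eq:mv_posterior} and part~(ii) of Proposition~\ref{prop:updating}, $p(\eta_t\mid D^{t-1})$ is the $\mathrm{Gam}(\tilde a,\tilde b)$ prior from the multivariate filter. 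The prior is generated only by the filter recursion, so it is unaffected by the approximation, which enters only through the conditional law of $S_t$.

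Next I multiply the kernels. The integrand is
\[
\frac{(c\eta_t)^{\alpha^*}}{\Gamma(\alpha^*)} S_t^{\alpha^*-1} e^{-c\eta_t S_t}\cdot\frac{\tilde b^{\tilde a}}{\Gamma(\tilde a)}\eta_t^{\tilde a-1}e^{-\tilde b\eta_t}.
\]
As a function of $\eta_t$, this is proportional to the kernel of $\mathrm{Gam}(\alpha^*+\tilde a,\; \tilde b + cS_t)$. This step uses the two structural properties noted after~\eqref{eq:alpha_star}: $\alpha^*$ does not depend on $\eta_t$, and the rate $c\,\eta_t$ is linear in $\eta_t$. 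Integrating with $\int_0^\infty \eta^{k-1}e^{-r\eta}d\eta=\Gamma(k)/r^k$ gives~\eqref{eq:route_predictive} directly. Integrability over $(0,\infty)$ only needs $\alpha^*+\tilde a>0$, which holds since all parameters are positive.

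For the $F$ representation~\eqref{eq:route_F}, I set $X = cS_t/\tilde b$. The Jacobian is $\tilde b/c$, so the factors of $c$ and $\tilde b$ cancel and $X$ has the standard beta-prime density
\[
\frac{\Gamma(\alpha^*+\tilde a)}{\Gamma(\alpha^*)\Gamma(\tilde a)}\, x^{\alpha^*-1}(1+x)^{-(\alpha^*+\tilde a)} .
\]
It is standard that if $X\sim\beta'(d_1/2,d_2/2)$ then $(d_2/d_1)X\sim F(d_1,d_2)$. With $d_1=2\alpha^*$ and $d_2=2\tilde a$, this gives $(\tilde a/\alpha^*)X\sim F(2\alpha^*,2\tilde a)$. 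This is the same argument that gives~\eqref{eq:F_relation} in the univariate case, with $(\alpha, e^{\beta^Tu_t})$ replaced by $(\alpha^*, c)$.

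No step is a real obstacle, since the result is exact given the approximation. The only point needing care is to state clearly that ``$=$'' in~\eqref{eq:route_predictive} means exact under the surrogate conditional law~\eqref{eq:moment_match}, and that the prior $(\tilde a,\tilde b)$ is the exact filter output. It is also worth checking consistency: when all $\lambda_j=\lambda$, the construction gives $\alpha^*=m\alpha$ and $c=\lambda$. The result is then the exact law of $\sum_j y_{jt}$, which can be confirmed by integrating~\eqref{eq:joint_predictive} over the simplex $\sum_j y_{jt}=S_t$ using the Dirichlet integral.
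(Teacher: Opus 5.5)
Your proposal is correct and follows the paper's proof: integrate the moment-matched $\mathrm{Gam}(\alpha^*, c\,\eta_t)$ density against the $\mathrm{Gam}(\tilde a, \tilde b)$ filter prior using the standard conjugacy calculation, then apply the beta-prime/$F$ relationship. You fill in details the paper leaves implicit (the kernel integration, the change of variables $X = cS_t/\tilde b$, and the $(d_2/d_1)X$ scaling), and your homogeneous-segment check via the Dirichlet integral is a correct extra consistency confirmation.
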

\begin{proof}
The integral $\int_0^\infty \mathrm{Gam}(S_t; \alpha^*, c\eta_t) \cdot \mathrm{Gam}(\eta_t; \tilde{a}, \tilde{b})\,d\eta_t$ yields the compound Gamma~\eqref{eq:route_predictive} by the standard conjugacy calculation. The $F$-distribution relationship follows from the well-known connection between the Beta prime distribution and the $F$-distribution.
\end{proof}

This result is the central contribution of the paper. It means that:
\begin{itemize}
\item The route predictive CDF is
\begin{equation}
P(S_t \leq \tau \mid D^{t-1}) = F_{2\alpha^*,\, 2\tilde{a}}\!\left(\frac{\tilde{a}}{\alpha^*}\cdot\frac{c\,\tau}{\tilde{b}}\right),
\label{eq:route_cdf}
\end{equation}
where $F_{\nu_1,\nu_2}$ denotes the CDF of the $F$-distribution with degrees of freedom $\nu_1$ and $\nu_2$.
\item The route predictive quantile at level $q$ is
\begin{equation}
S_t^{(q)} = \frac{\alpha^*\,\tilde{b}}{c\,\tilde{a}}\cdot F^{-1}_{2\alpha^*,\, 2\tilde{a}}(q).
\label{eq:route_quantile}
\end{equation}
\item Both are available in $O(1)$ computation time via standard statistical functions.
\end{itemize}

\subsection{Route Reliability Metrics}

Using the closed-form route predictive, standard reliability metrics recommended by \citet{FHWA2006} and \citet{Lomax2003} can be computed instantly at each time step.

\paragraph{On-time arrival probability.} For a threshold $\tau$ (e.g., the free-flow travel time plus a buffer),
\[
P_{\mathrm{on\text{-}time}}(t) = P(S_t \leq \tau \mid D^{t-1}) = F_{2\alpha^*,2\tilde{a}}\!\left(\frac{\tilde{a}\,c\,\tau}{\alpha^*\,\tilde{b}}\right).
\]

\paragraph{Planning Time Index (PTI).} The ratio of the 95th percentile of travel time to the free-flow travel time:
\[
\mathrm{PTI}(t) = \frac{S_t^{(0.95)}}{S_{\mathrm{ff}}},
\]
where $S_{\mathrm{ff}}$ is the free-flow route travel time and $S_t^{(0.95)}$ is given by~\eqref{eq:route_quantile} with $q=0.95$.

\paragraph{Buffer Index.} The normalized difference between the 95th percentile and the median:
\[
\mathrm{BI}(t) = \frac{S_t^{(0.95)} - S_t^{(0.50)}}{S_t^{(0.50)}}.
\]

All three metrics are functions of $F$-distribution quantiles evaluated at the current prior parameters $(\tilde{a}, \tilde{b})$, making them computable in real time. This contrasts sharply with existing approaches where reliability metrics are either computed from historical samples (static, not real-time) or require simulation \citep{Taylor2013}.

\section{Bayesian Inference}\label{sec:inference}

\subsection{MCMC via Gibbs Sampling and FFBS}

For batch inference from a complete dataset $D^T$, we develop a Gibbs sampler. The joint posterior of all unknown parameters $\Psi = (\lambda, \alpha, \gamma, \beta)$ and the environmental process $\eta^T = (\eta_1,\ldots,\eta_T)$ cannot be obtained in closed form, but the model structure enables efficient blocked sampling.

The Gibbs sampler alternates between:
\begin{enumerate}
\item \textbf{Sampling $\eta^T$ given $\Psi$ and $D^T$}: We use the forward filtering backward sampling (FFBS) algorithm of \citet{FS94}. The forward pass computes the filtering distributions $(\eta_t \mid \Psi, D^t) \sim \mathrm{Gam}(a_t, b_t)$ using the multivariate recursion $a_t = \gamma a_{t-1} + m\alpha$, $b_t = \gamma b_{t-1} + \sum_j \lambda_j y_{jt} e^{\beta^T u_t}$ (cf.~equation~\eqref{eq:update} for the univariate case). The backward pass draws $\eta_T \sim \mathrm{Gam}(a_T, b_T)$ and then, for $t = T-1, \ldots, 1$, uses the shifted Gamma characterization of \citet{APS20}: draw $z_t \sim \mathrm{Gam}((1-\gamma)a_t, b_t)$ and set $\eta_t = \gamma \eta_{t+1} + z_t$. This avoids truncation-based sampling and is efficient to implement.

\item \textbf{Sampling $\lambda_j$ given $\eta^T$, other parameters, and $D^T$}: With independent Gamma priors $\lambda_j \sim \mathrm{Gam}(r_{j0}, q_{j0})$, the full conditional is
\[
(\lambda_j \mid \eta^T, D^T, \alpha, \beta) \sim \mathrm{Gam}\!\left(r_{j0} + T\alpha,\; q_{j0} + \sum_{t=1}^T \eta_t y_{jt} e^{\beta^T u_t}\right).
\]
The $\lambda_j$'s are conditionally independent given $\eta^T$.

\item \textbf{Sampling $\gamma$}: The discount factor does not have a conjugate full conditional. Following \citet{APS20}, we use a Metropolis step targeting the marginal posterior $p(\gamma \mid D^T, \alpha, \beta, \lambda) \propto \prod_t p(\mathbf{y}_t \mid D^{t-1}, \gamma, \alpha, \beta, \lambda) \cdot p(\gamma)$, where the product terms are the one-step-ahead predictive densities from the forward filter with $\eta$ integrated out. A Beta proposal distribution centered at the current value provides a natural bounded proposal \citep{GL06}.

\item \textbf{Sampling $\alpha$ and $\beta$}: These are also updated via Metropolis steps using the same marginal likelihood with $\eta$ integrated out.
\end{enumerate}

When $(\alpha, \gamma)$ are fixed---for example, via empirical Bayes selection through the grid search of Section~\ref{sec:application}---steps (iii) and (iv) are unnecessary and the sampler reduces to alternating FFBS for $\eta^T$ and conjugate Gamma draws for $\lambda_j$, both of which are exact (no Metropolis steps). This \emph{conditional Gibbs sampler} mixes rapidly due to the fully conjugate structure and provides posterior distributions for the latent environment process and segment rate parameters that quantify parameter uncertainty beyond point estimates.

\subsection{Particle Filtering for Real-Time Updating}

For real-time updating, the Gibbs sampler is impractical since it requires reprocessing all data each time a new observation arrives. When $(\alpha, \gamma, \lambda)$ are fixed (e.g., selected by grid search as in Section~\ref{sec:application}), the conjugate updating in Proposition~\ref{prop:updating} provides exact online inference for $\eta_t$ with no approximation. When $\Psi$ must also be estimated online, we use particle learning methods \citep{Nick2010} that recursively update the posterior of $(\eta_t, \Psi)$ from time $t-1$ to time $t$.

At each time step, a set of weighted particles $\{(\eta_t^{(i)}, \Psi^{(i)}, w_t^{(i)})\}_{i=1}^N$ represents the posterior distribution. When new data $(y_{1t},\ldots,y_{mt}, u_t)$ arrives:
\begin{enumerate}
\item Propagate each particle's environment: $\eta_t^{(i)} \sim \mathrm{Gam}(\gamma a_{t-1}^{(i)}, \gamma b_{t-1}^{(i)})$.
\item Update weights using the observation likelihood.
\item Resample if the effective sample size drops below a threshold.
\item Update sufficient statistics for $\lambda_j$: since $(\lambda_j \mid \eta^t, D^t) \sim \mathrm{Gam}(r_{jt}, q_{jt})$ with $r_{jt} = r_{j,t-1} + \alpha$ and $q_{jt} = q_{j,t-1} + \eta_t y_{jt} e^{\beta^T u_t}$, the update requires only incrementing the sufficient statistics.
\end{enumerate}

The particle filter provides real-time posterior distributions and predictive densities, making it suitable for operational deployment in traffic management systems \citep{PolsonSokolov2015,Mihaylova2007}.

\section{Application: I-55 Stevenson Expressway}\label{sec:application}

\subsection{Data Description}

We apply the model to traffic data from the Stevenson Expressway (I-55) northbound in Chicago, Illinois. The data come from the Gary-Chicago-Milwaukee (GCM) corridor traffic management system and consist of hourly speed, volume, and occupancy readings from loop detector stations.

\begin{figure}[htbp]
\centering
\includegraphics[width=0.7\textwidth]{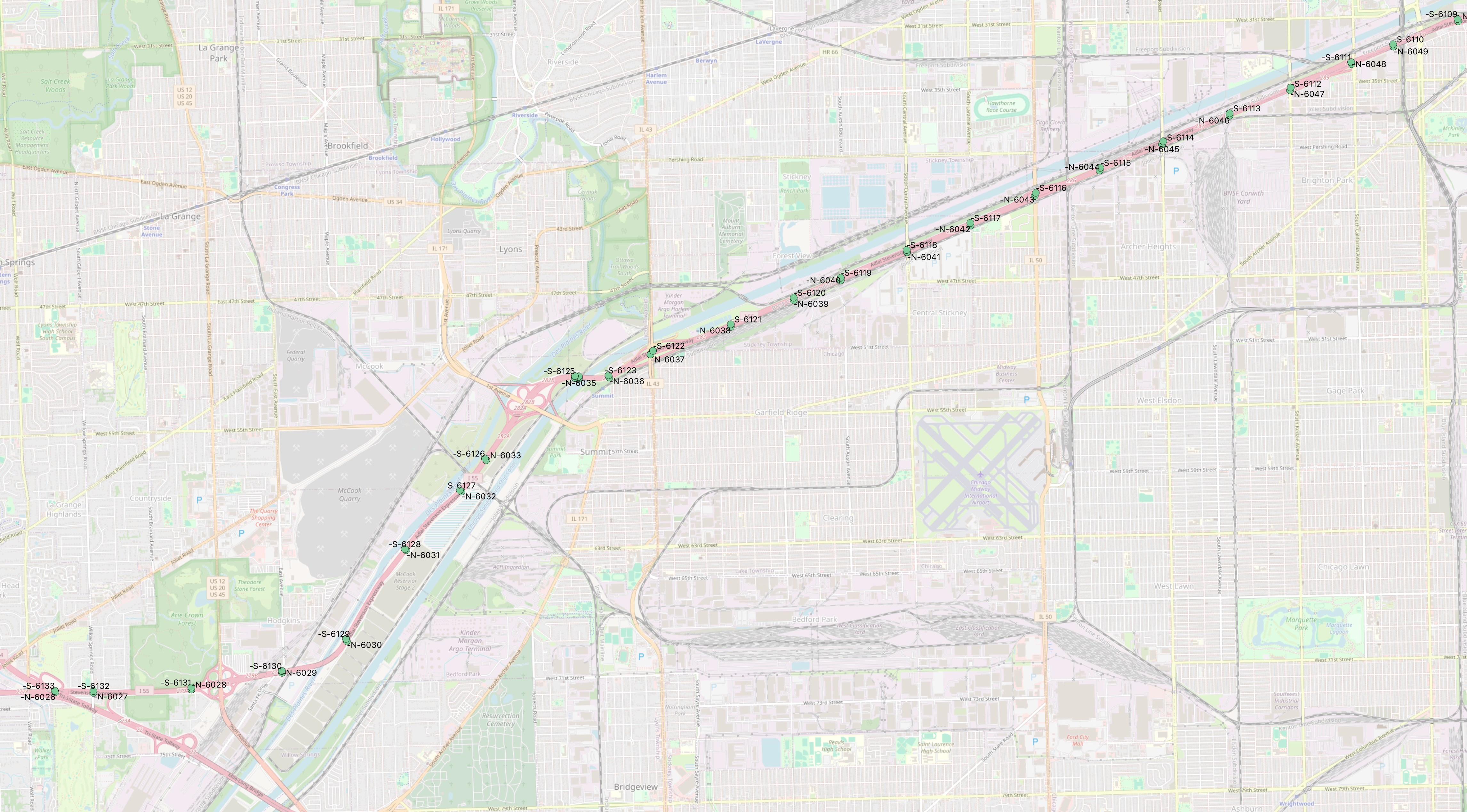}
\caption{Locations of 16 loop detector stations on I-55 northbound (Stevenson Expressway), spanning 8.26 miles from sensor 6030 near I-294 to sensor 6045 near Damen Avenue.}
\label{fig:map}
\end{figure}

Our study corridor consists of 16 consecutive loop detector stations (sensors 6030--6045) spanning 8.26 miles of the northbound Stevenson Expressway (Figure~\ref{fig:map}). Sensors 6034 and 6035 are co-located (31~m apart), resulting in one near-zero-length segment; they are both retained because each reports independent speed measurements. Table~\ref{tab:segments} summarizes the segment distances computed from sensor coordinates via the Haversine formula and the mean travel times.

\begin{table}[htbp]
\centering
\caption{Segment distances and mean travel times (Wednesday 2--8pm, 2019). The $\hat\lambda$ values are rounded; $\alpha^* = 13.3$ in Table~\ref{tab:multivariate} is computed from full-precision values.}
\label{tab:segments}
\begin{tabular}{lrrrrr}
\toprule
Sensor & Distance (mi) & Mean speed (mph) & Mean TT (min) & Gamma $\hat\alpha$ & $\hat\lambda$ \\
\midrule
6030 & 0.853 & 28.5 & 1.84 & 34.2 & 0.46 \\
6031 & 0.632 & 17.1 & 2.44 & 9.3 & 0.35 \\
6032 & 0.313 & 21.2 & 0.96 & 11.1 & 0.88 \\
6033 & 0.949 & 19.8 & 3.14 & 10.7 & 0.27 \\
6034 & 0.031 & 20.4 & 0.10 & 7.9 & 8.34 \\
6035 & 0.239 & 16.6 & 1.04 & 4.9 & 0.82 \\
6036 & 0.370 & 20.3 & 1.20 & 11.0 & 0.71 \\
6037 & 0.666 & 14.7 & 3.49 & 4.1 & 0.24 \\
6038 & 0.536 & 15.0 & 2.47 & 7.5 & 0.34 \\
6039 & 0.400 & 22.0 & 1.16 & 16.4 & 0.73 \\
6040 & 0.565 & 19.2 & 1.83 & 26.1 & 0.47 \\
6041 & 0.539 & 19.4 & 1.93 & 6.3 & 0.44 \\
6042 & 0.555 & 20.9 & 1.73 & 11.2 & 0.49 \\
6043 & 0.547 & 20.5 & 1.92 & 5.3 & 0.44 \\
6044 & 0.530 & 16.6 & 2.09 & 12.6 & 0.41 \\
6045 & 0.530 & 23.3 & 1.46 & 14.7 & 0.59 \\
\midrule
Total & 8.26 &  & 28.80 & & \\
\bottomrule
\end{tabular}
\end{table}

We focus on Wednesday afternoons during 2019 (the year with the most complete data), using seven one-hour periods per day (2:00--2:59pm through 8:00--8:59pm). This gives $52 \times 7 = 364$ potential time periods; after removing periods with missing data from any sensor, 248 complete observations remain.

Segment-level travel times are computed as $y_{jt} = d_j / v_{jt}$, where $d_j$ is the distance from sensor $j$ to the next downstream sensor (in miles) and $v_{jt}$ is the hourly average speed at sensor $j$ (in mph), giving travel time in hours which we convert to minutes. For the last sensor (6045) the distance is set equal to that of the preceding segment. The route travel time is $S_t = \sum_{j=1}^{m} y_{jt}$ with $m = 16$.

\subsection{Static Distribution Analysis}

\begin{figure}[htbp]
\centering
\includegraphics[width=\textwidth]{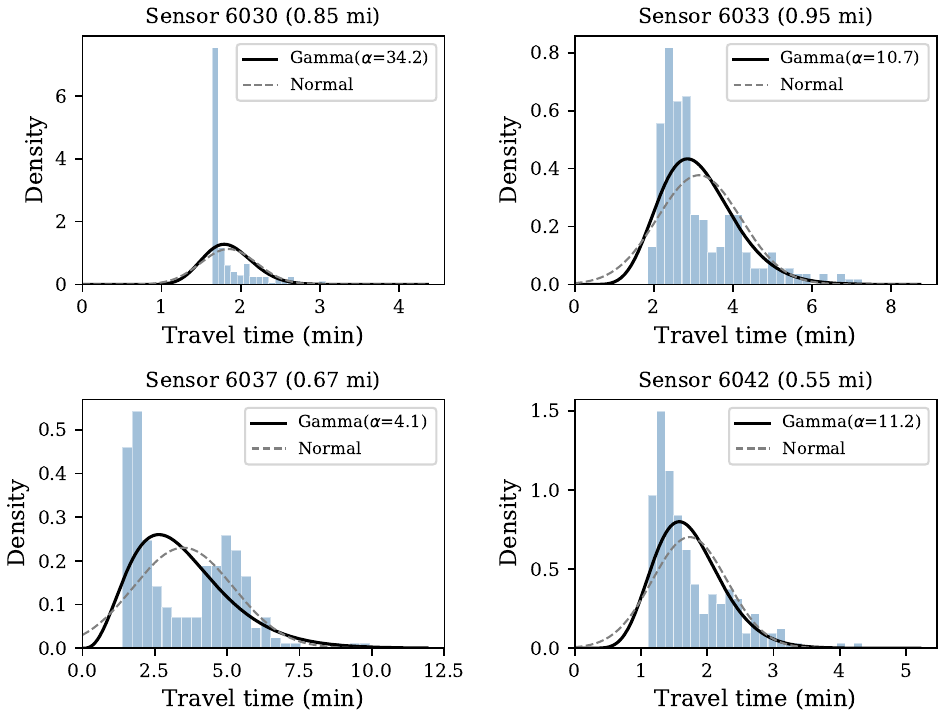}
\caption{Segment-level travel time distributions for four representative sensors, ranging from low variability (sensor 6030, $\hat\alpha = 34.2$) to high variability (sensor 6037, $\hat\alpha = 4.1$). The Gamma distribution (solid) captures the right-skewness present in the data, while the Normal (dashed) cannot accommodate the positive support and asymmetry. Each segment exhibits distinct distributional characteristics, motivating the segment-specific $\lambda_j$ parameters.}
\label{fig:segment_fits}
\end{figure}

Figure~\ref{fig:segment_fits} illustrates the segment-level distributional heterogeneity across four representative sensors. Each sensor has a distinct shape, ranging from the nearly symmetric profile of sensor 6030 ($\hat\alpha = 34.2$, short segment with moderate speeds) to the heavily right-skewed sensor 6037 ($\hat\alpha = 4.1$, a recurring bottleneck area). The Gamma distribution provides a good fit across this full range of shapes, while the Normal distribution fails to capture the positive support and skewness.

\begin{figure}[htbp]
\centering
\includegraphics[width=\textwidth]{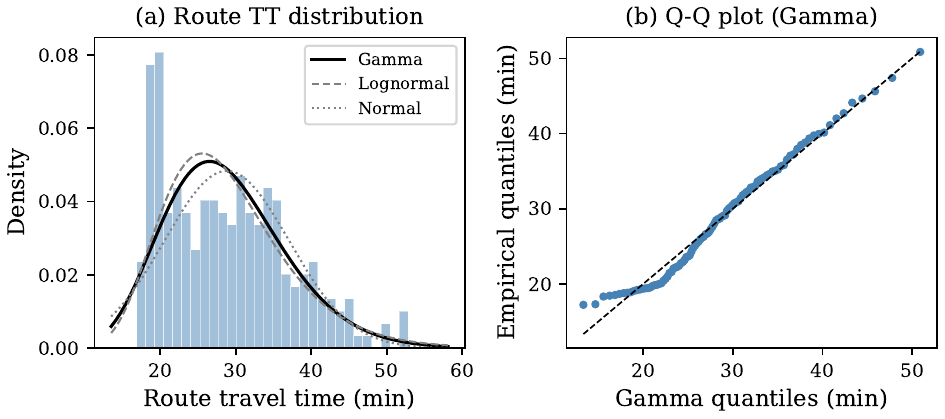}
\caption{(a) Route-level travel time distribution with Gamma, Lognormal, and Normal fits. (b) Q-Q plot of empirical quantiles against the fitted Gamma distribution.}
\label{fig:route_dist}
\end{figure}

The route travel time has mean 28.8 minutes, standard deviation 8.2 minutes, and ranges from 16.9 to 52.8 minutes (Figure~\ref{fig:route_dist}). Table~\ref{tab:static_fit} shows the goodness-of-fit for several static distributions. The Gamma has the highest KS $p$-value, while the inverse Gaussian achieves the lowest AIC and BIC; all three skewed distributions substantially outperform the symmetric normal, which fails due to its support on $(-\infty, \infty)$, consistent with the findings of \citet{Susilawati2013} and \citet{Zang2022}.

\begin{figure}[htbp]
\centering
\includegraphics[width=0.8\textwidth]{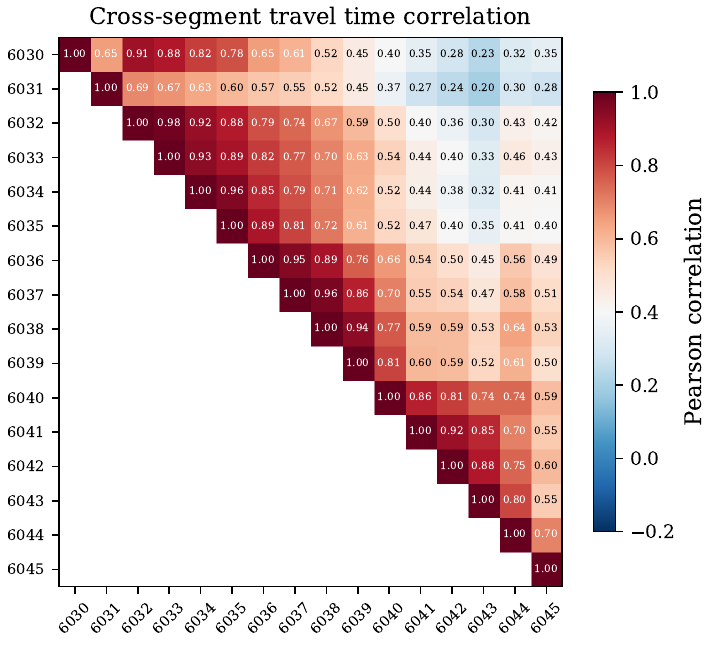}
\caption{Cross-segment travel time correlation matrix. Adjacent segments show strong positive correlation (0.6--0.8), and even distant segments remain positively correlated, supporting the common random environment assumption.}
\label{fig:correlation}
\end{figure}

An important empirical observation is the strong positive correlation among segment travel times (Figure~\ref{fig:correlation}). Adjacent segments exhibit correlations of 0.6--0.8, and even distant segments show correlations of 0.2--0.4. This pervasive positive dependence is precisely what the common random environment $\eta_t$ is designed to capture: shared traffic conditions (congestion waves, weather, demand shocks) that affect the entire corridor simultaneously.

\begin{table}[htbp]
\centering
\caption{Static distribution fits to route travel time (248 observations).}
\label{tab:static_fit}
\begin{tabular}{lrrrr}
\toprule
Distribution & KS statistic & KS $p$-value & AIC & BIC \\
\midrule
Gamma & 0.080 & 0.077 & 1732.0 & 1739.0 \\
Lognormal & 0.085 & 0.055 & 1727.7 & 1734.7 \\
Inv.\ Gaussian & 0.084 & 0.055 & 1726.0 & 1733.0 \\
Weibull & 0.092 & 0.029 & 1758.3 & 1765.4 \\
\bottomrule
\end{tabular}
\end{table}

\paragraph{Bimodality and regime structure.} While the single-component Gamma fits the route distribution adequately (Table~\ref{tab:static_fit}), it masks an important feature. Figure~\ref{fig:mixture_decomposition} shows Gamma mixture fits with $K = 1, 2, 3$ components. A two-component mixture ($K = 2$, BIC $= 1698$ versus $1739$ for $K = 1$) reveals a clear bimodal structure: a narrow free-flow component centered at $\mu = 19$ minutes (weight $0.16$) and a broader congested component centered at $\mu = 31$ minutes (weight $0.84$). The third component ($K = 3$, BIC $= 1710$) does not improve the fit. This bimodality reflects the alternation between distinct traffic regimes---free-flow and congested states---across the observation period. A single static distribution averages over these regimes; the dynamic model developed in Section~\ref{sec:model} addresses this directly, since the time-varying environment $\eta_t$ shifts the predictive distribution toward the free-flow mode when conditions are good and toward the congested mode when conditions deteriorate.

\begin{figure}[htbp]
\centering
\includegraphics[width=\textwidth]{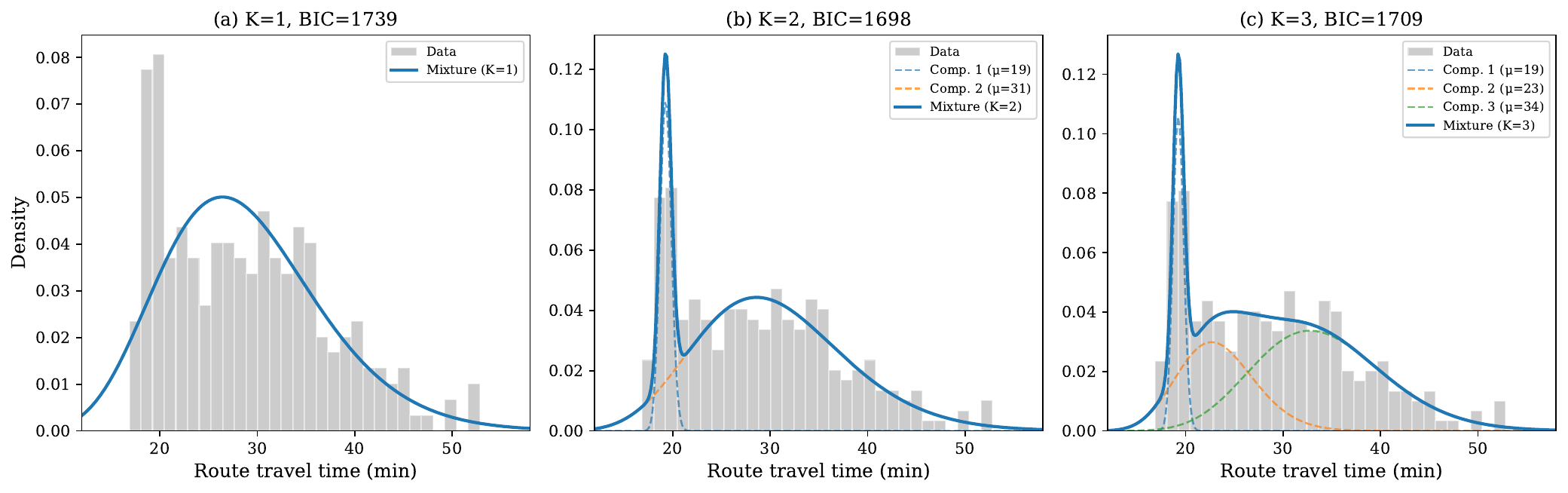}
\caption{Gamma mixture decomposition of route travel time. (a)~Single Gamma ($K=1$). (b)~Two-component mixture ($K=2$, BIC-preferred): a narrow free-flow component ($\mu = 19$ min) and a broad congested component ($\mu = 31$ min). (c)~Three components ($K=3$) do not improve BIC. The bimodality reflects regime switching between free-flow and congested states.}
\label{fig:mixture_decomposition}
\end{figure}

Per-segment Gamma fits reveal substantial heterogeneity in the shape parameter $\hat\alpha$, ranging from 4.1 (sensor 6037, high variability) to 34.2 (sensor 6030, low variability). The segment-specific $\lambda_j$ parameters in our multivariate model capture this heterogeneity through the segment means, while the common environment $\eta_t$ links all segments through shared conditions. We note, however, that the multivariate model enforces a \emph{common} shape parameter $\alpha$ across all segments; only the rate $\lambda_j \eta_t$ varies by segment. This is the cost of preserving the conjugate structure that yields the closed-form route distribution. The best-fitting common $\alpha = 1$ (Section~\ref{sec:application}) is far from the per-segment MLEs, so the segment-level Gamma shapes are only approximately captured. Despite this trade-off, the route-level calibration remains excellent because the moment-matching aggregation over $m$ segments produces an effective route shape $\alpha^*$ (equation~\ref{eq:alpha_star}) that is well-suited for route-level prediction, even when the common $\alpha$ does not perfectly match each individual segment.

\subsection{Empirical Bayes Selection of $(\alpha, \gamma)$}\label{sec:hyperpar}

In this application, we set the covariate vector $u_t = 0$ (equivalently, $\beta = 0$), so that all time-varying conditions are absorbed by the latent environment $\eta_t$. While incorporating observed covariates (weather, incidents) via the $e^{\beta^T u_t}$ term is straightforward in principle, the hourly loop-detector data do not include concurrent covariate measurements.

Our inference proceeds in two stages. First, we select the hyperparameters $(\alpha, \gamma)$ via empirical Bayes, evaluating route-level predictive calibration across a grid. Second, conditional on the selected $(\alpha, \gamma)$, we run the conditional Gibbs sampler (Section~\ref{sec:inference}) to obtain full posterior distributions for the segment rate parameters $\lambda_j$ and the latent environment process $\eta_t$. The two-stage approach is motivated by the fact that the full joint posterior of $(\alpha, \gamma, \lambda, \eta^T)$ concentrates at a mode that optimizes the segment-level marginal likelihood but produces poor route-level predictions; the empirical Bayes step selects $(\alpha, \gamma)$ to optimize route-level calibration instead.

For the grid search, we initialize $\lambda_j$ proportional to $1/\bar{y}_j$ (the inverse of the segment sample mean), normalized to have unit mean, resolving the scale ambiguity discussed in Section~\ref{sec:multivariate}. We evaluate two approaches: a univariate model applied directly to route travel time, and the multivariate model applied to individual segment travel times with route reliability computed via the moment-matching formula~\eqref{eq:route_F}.

For model assessment, we use three criteria:
\begin{enumerate}
\item \textbf{Log predictive likelihood:} $\sum_{t > t_0} \log p(y_t \mid D^{t-1})$, the sum of one-step-ahead log predictive densities after a burn-in of $t_0 = 30$.
\item \textbf{PIT calibration:} The Probability Integral Transform values $u_t = P(Y_t \leq y_t \mid D^{t-1})$ should be Uniform$(0,1)$ if the model is well-calibrated. We use the Kolmogorov--Smirnov test of uniformity.
\item \textbf{Coverage:} The fraction of observations falling within the 90\% predictive interval.
\end{enumerate}

\paragraph{Univariate route model.} Table~\ref{tab:univariate} shows results from a grid search over $\alpha$ and $\gamma$. The model with $\alpha = 10$ provides well-calibrated predictions across a range of discount factors, with 90\% interval coverage consistently near the nominal level.

\begin{table}[htbp]
\centering
\caption{Univariate dynamic Gamma model for route travel time: selected parameter combinations.}
\label{tab:univariate}
\begin{tabular}{rrrrrr}
\toprule
$\alpha$ & $\gamma$ & Log pred.\ lik. & PIT KS stat & KS $p$-value & 90\% coverage \\
\midrule
10 & 0.80 & $-774.2$ & 0.082 & 0.103 & 0.977 \\
10 & 0.85 & $-771.7$ & 0.071 & 0.211 & 0.977 \\
10 & 0.90 & $-769.7$ & 0.075 & 0.165 & 0.972 \\
10 & 0.95 & $-768.0$ & 0.074 & 0.171 & 0.968 \\
10 & 0.99 & $-768.2$ & 0.068 & 0.247 & 0.963 \\
\bottomrule
\end{tabular}
\end{table}

\paragraph{Multivariate model with route reliability.} The multivariate model is applied to the $m$-dimensional vector of segment travel times, and route reliability is computed via the closed-form $F$-distribution formula~\eqref{eq:route_F}. Figure~\ref{fig:heatmap} shows the route-level PIT calibration (KS $p$-value) across the $(\alpha, \gamma)$ grid. A clear region of good calibration emerges at $\alpha \in [1.0, 1.5]$ and $\gamma \in [0.50, 0.70]$, with the best calibration at $\alpha = 1.0$, $\gamma = 0.70$ (KS $p = 0.467$). Table~\ref{tab:multivariate} shows detailed results for the best-calibrated combinations. These values are used in the subsequent Bayesian posterior analysis (Section~\ref{sec:posterior}).

\begin{figure}[htbp]
\centering
\includegraphics[width=0.8\textwidth]{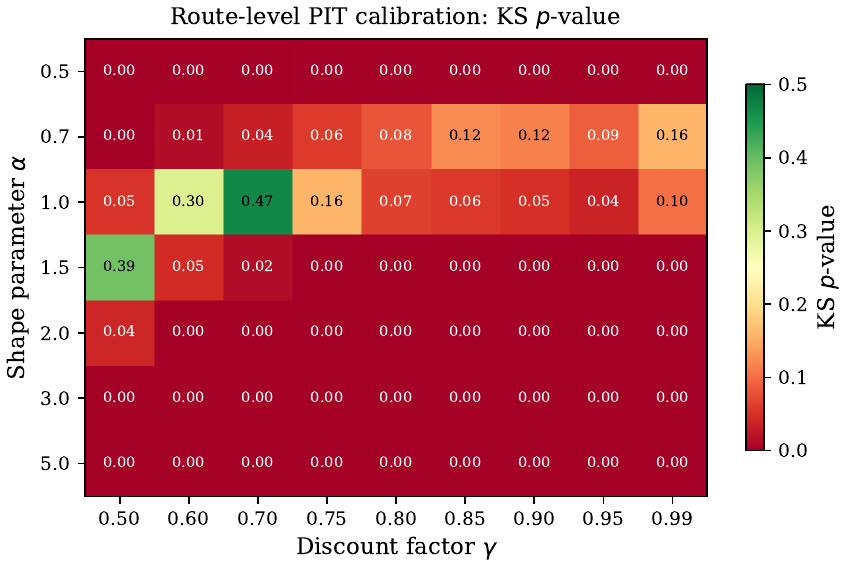}
\caption{Empirical Bayes selection of $(\alpha, \gamma)$: route-level PIT calibration (KS $p$-value) across the hyperparameter grid. Green cells indicate well-calibrated models ($p > 0.05$). The selected values $\alpha = 1.0$, $\gamma = 0.70$ ($p = 0.467$) are used for subsequent Bayesian posterior inference.}
\label{fig:heatmap}
\end{figure}

\begin{table}[htbp]
\centering
\caption{Multivariate dynamic Gamma model: route-level calibration via moment matching.}
\label{tab:multivariate}
\begin{tabular}{rrrrrrr}
\toprule
$\alpha$ & $\gamma$ & Joint log lik. & Route KS & KS $p$ & Route 90\% cov. & $\alpha^*$ \\
\midrule
1.00 & 0.70 & $-4927$ & 0.057 & 0.467 & 0.954 & 13.3 \\
1.50 & 0.50 & $-4203$ & 0.060 & 0.389 & 0.931 & 19.9 \\
1.00 & 0.60 & $-4930$ & 0.065 & 0.295 & 0.972 & 13.3 \\
1.00 & 0.65 & $-4928$ & 0.067 & 0.269 & 0.963 & 13.3 \\
0.70 & 0.99 & $-5603$ & 0.075 & 0.159 & 0.963 & 9.3 \\
\bottomrule
\end{tabular}
\end{table}

The multivariate model with $\alpha = 1.0$ (corresponding to exponential segment travel times given $\eta_t$) and $\gamma = 0.70$ achieves the best route calibration: the PIT KS test yields a nominal $p$-value of 0.47 (though see the autocorrelation caveat in Section~\ref{sec:application}), and the 90\% interval coverage is 95.4\%. The effective route shape parameter $\alpha^* = 13.3$ (using MLE $\lambda_j$) provides a well-shaped predictive distribution. We adopt these values for the full Bayesian posterior analysis in Section~\ref{sec:posterior}.

The runner-up $\alpha = 1.5$, $\gamma = 0.50$ also achieves good calibration (KS $p = 0.389$) with a faster-adapting environment process. The choice of $\gamma = 0.50$--$0.75$ implies that the model gives significant weight to recent observations: with hourly data, $\gamma = 0.70$ corresponds to an effective memory of roughly $1/(1-\gamma) \approx 3.3$ hours, meaning that conditions from earlier in the afternoon are substantially discounted by evening. This short memory is consistent with the rapidly changing conditions on an urban expressway during peak hours, where congestion can build and dissipate within 1--2 hours.

An important finding is that while the best univariate model has $\alpha = 10$ (moderately peaked Gamma shape for route TT), the best multivariate model has $\alpha = 1$--$1.5$ at the segment level. The route-level shape $\alpha^* = 13$--$20$ emerges from aggregating across $m = 16$ segments, demonstrating how the moment-matching formula~\eqref{eq:alpha_star} appropriately scales up the effective shape parameter. We note that $\alpha = 1$ corresponds to exponential segment travel times conditional on $\eta_t$---the most variable member of the Gamma family. The Gamma moment-matching approximation for the sum of $m$ independent exponentials is known to be accurate even for moderate $m$ \citep{MoschouSingle}, and with $m = 16$ the resulting $\alpha^* \approx 13$ produces a well-shaped route distribution.

\subsection{Bayesian Posterior Analysis}\label{sec:posterior}

With $(\alpha, \gamma) = (1.0, 0.70)$ selected via empirical Bayes (Section~\ref{sec:hyperpar}), we run the conditional Gibbs sampler of Section~\ref{sec:inference} to obtain full posterior distributions for the segment rate parameters $\lambda_j$ and the latent environment process $\eta^T = (\eta_1, \ldots, \eta_T)$. Each iteration consists of two conjugate steps: (i)~forward filtering and FFBS backward sampling \citep{FS94,APS20} of $\eta^T$, and (ii)~drawing each $\lambda_j$ from its Gamma full conditional given $\eta^T$. Four chains of 10{,}000 iterations (2{,}000 burn-in, thinning by 2) are run from different initial values, yielding 16{,}000 posterior samples. The fully conjugate structure produces excellent mixing: the maximum split-$\hat{R}$ across all $\lambda_j$ is $1.0004$ and the minimum effective sample size is $3{,}722$.

Table~\ref{tab:lambda_posterior} reports the posterior summaries for $\lambda_j$. The posterior means are close to the MLE values from Table~\ref{tab:segments}, with narrow 95\% credible intervals that confirm the data are informative for these parameters. The posterior for the route-level shape parameter is $\alpha^* = 13.26$ with 95\% CrI $[12.91, 13.57]$, consistent with the plug-in value of 13.3.

\begin{table}[htbp]
\centering
\caption{Posterior summaries for segment rate parameters $\lambda_j$ from the conditional Gibbs sampler ($\alpha = 1.0$, $\gamma = 0.70$). Posterior means, 95\% credible intervals, and MLE values.}
\label{tab:lambda_posterior}
\begin{tabular}{lrrr}
\toprule
Sensor & Post.\ mean & 95\% CrI & MLE \\
\midrule
6030 & 0.456 & $[0.395, 0.520]$ & 0.462 \\
6031 & 0.349 & $[0.302, 0.399]$ & 0.350 \\
6032 & 0.882 & $[0.769, 1.005]$ & 0.884 \\
6033 & 0.272 & $[0.235, 0.311]$ & 0.272 \\
6034 & 8.359 & $[7.830, 8.864]$ & 8.338 \\
6035 & 0.828 & $[0.722, 0.944]$ & 0.820 \\
6036 & 0.711 & $[0.618, 0.812]$ & 0.713 \\
6037 & 0.247 & $[0.213, 0.284]$ & 0.245 \\
6038 & 0.345 & $[0.300, 0.396]$ & 0.345 \\
6039 & 0.728 & $[0.632, 0.831]$ & 0.734 \\
6040 & 0.461 & $[0.401, 0.528]$ & 0.466 \\
6041 & 0.444 & $[0.383, 0.507]$ & 0.443 \\
6042 & 0.490 & $[0.424, 0.560]$ & 0.492 \\
6043 & 0.442 & $[0.383, 0.504]$ & 0.444 \\
6044 & 0.405 & $[0.352, 0.464]$ & 0.409 \\
6045 & 0.581 & $[0.504, 0.664]$ & 0.586 \\
\bottomrule
\end{tabular}
\end{table}

\begin{figure}[htbp]
\centering
\includegraphics[width=\textwidth]{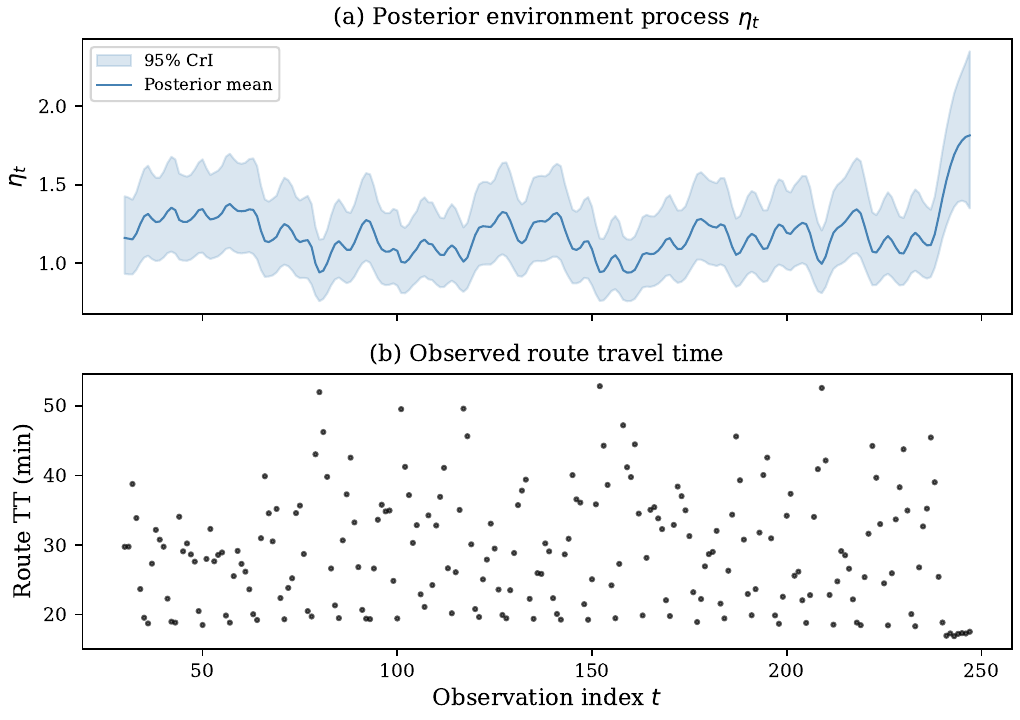}
\caption{(a) Posterior mean and 95\% credible interval of the common environment process $\eta_t$ from the conditional Gibbs sampler. (b) Observed route travel time for reference. The environment process is inversely related to congestion: $\eta_t$ rises during free-flow and dips during congested periods.}
\label{fig:eta_posterior}
\end{figure}

\begin{figure}[htbp]
\centering
\includegraphics[width=0.7\textwidth]{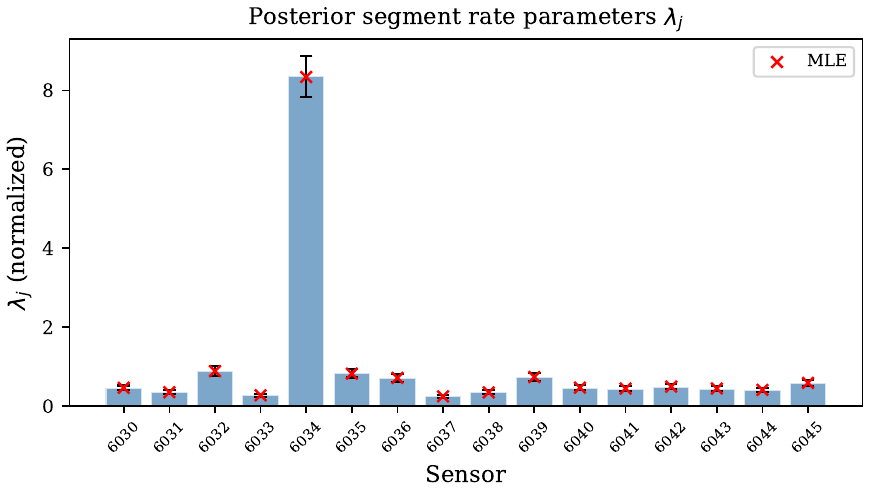}
\caption{Posterior distributions of segment rate parameters $\lambda_j$: bars show posterior means with 95\% credible intervals (error bars), and crosses show the MLE values. The posterior concentrates tightly around the MLE for all segments.}
\label{fig:lambda_posterior}
\end{figure}

Figure~\ref{fig:eta_posterior} shows the posterior trajectory of $\eta_t$ with pointwise 95\% credible intervals. The environment process tracks inversely with observed congestion, rising during free-flow (low travel times) and declining during congested periods. The credible intervals are narrow relative to the posterior mean, indicating that the filtering posterior pins down the environment state at each time step with limited uncertainty.

Figure~\ref{fig:lambda_posterior} shows the posterior distributions of $\lambda_j$ alongside the MLE values. The posterior concentrates tightly around the MLE for all 16 segments, with the largest relative uncertainty for sensor 6037 (the segment with the highest mean travel time and lowest $\lambda_j$). The large $\lambda_{6034} \approx 8.4$ reflects the near-zero distance of the co-located sensor pair 6034/6035.

The Bayesian predictive distribution, averaged over posterior samples of $(\lambda, \eta^T)$, yields a route-level KS $p$-value of 0.41 and 90\% coverage of 95.4\%---virtually identical to the plug-in results. This confirms that the data are sufficiently informative to make the plug-in and Bayesian predictive distributions effectively equivalent for this application. The practical implication is that the closed-form conjugate updating of Section~\ref{sec:route} can be used directly for real-time reliability computation with negligible loss relative to full Bayesian inference.

\subsection{Route Reliability Results}

Using the multivariate model with $(\alpha, \gamma) = (1.0, 0.70)$ and MLE $\lambda_j$ (which are virtually identical to the posterior means; see Table~\ref{tab:lambda_posterior}), we compute route reliability metrics at each time step via the conjugate online updating of Proposition~\ref{prop:updating}, which is the operationally relevant mode. The free-flow route travel time (5th percentile) is approximately 18 minutes for the 8.26-mile corridor, corresponding to an average speed of 27.5 mph.

\begin{figure}[htbp]
\centering
\includegraphics[width=\textwidth]{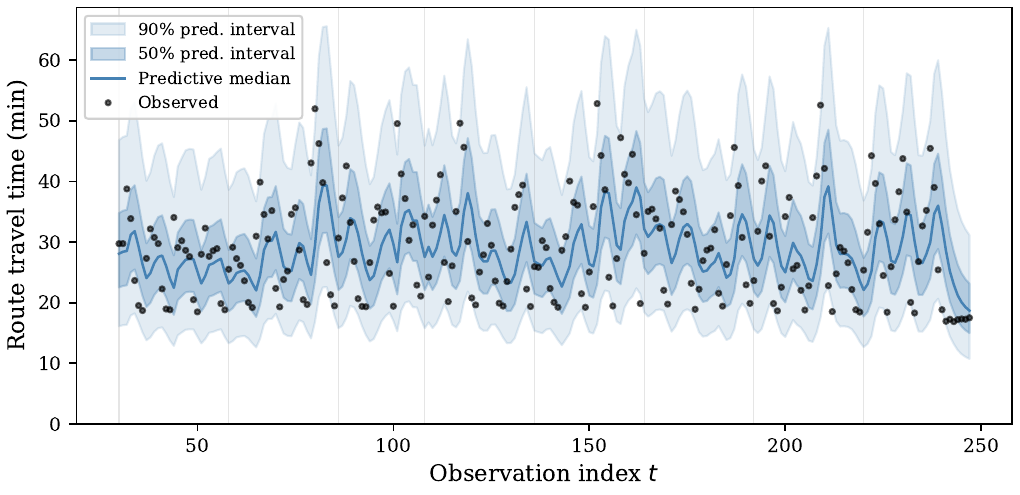}
\caption{Observed route travel times (dots) and one-step-ahead predictive intervals from the multivariate dynamic Gamma model ($\alpha = 1.0$, $\gamma = 0.70$). The 50\% (dark) and 90\% (light) predictive intervals track the evolving conditions, widening during congested periods and narrowing during free-flow.}
\label{fig:route_timeseries}
\end{figure}

Figure~\ref{fig:route_timeseries} shows the observed route travel times overlaid with the one-step-ahead predictive intervals from the multivariate model. The predictive bands adapt to changing conditions: they widen during high-congestion periods (reflecting greater uncertainty) and narrow during free-flow periods. Nearly all observations fall within the 90\% interval (actual coverage: 95.4\%), and the median tracks the central tendency well.

\begin{figure}[htbp]
\centering
\includegraphics[width=\textwidth]{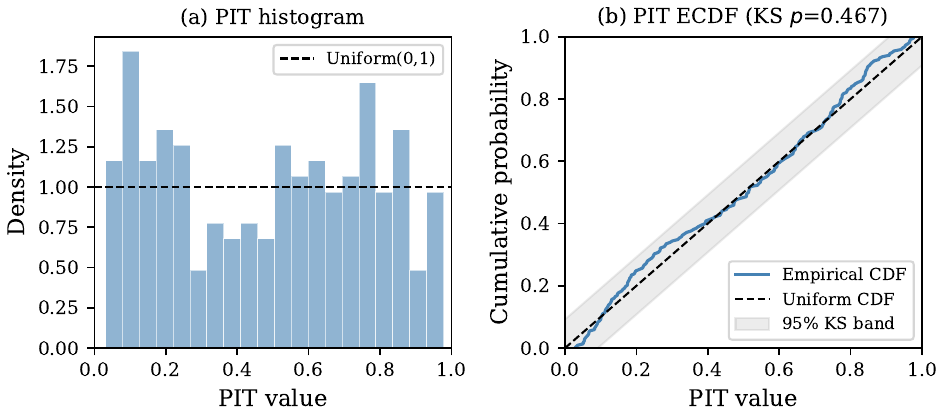}
\caption{Probability integral transform (PIT) calibration for the route-level predictive distribution. (a) PIT histogram: a well-calibrated model produces a uniform histogram. (b) Empirical CDF of PIT values vs.\ the Uniform(0,1) CDF, with the 95\% Kolmogorov--Smirnov confidence band (gray).}
\label{fig:pit}
\end{figure}

Figure~\ref{fig:pit} provides a formal calibration assessment. The PIT histogram is approximately uniform, and the PIT empirical CDF lies entirely within the 95\% KS confidence band, with $p = 0.47$. We note that the 90\% interval achieves 95.4\% coverage---above the nominal 90\%---indicating mildly conservative (over-dispersed) predictive tails. This over-coverage is consistent with the moment-matching approximation, which may slightly overestimate the tail weight of the route-level distribution. The KS test does not reject uniformity at the 5\% level, but with $n = 218$ evaluation points (after burn-in) the test has limited power to detect moderate departures. We also note that the standard KS test assumes i.i.d.\ PIT values, whereas the sequential structure of the predictions induces substantial autocorrelation in the PITs (lag-1 autocorrelation of 0.52; Ljung--Box $Q = 59.8$ on 1 degree of freedom, $p < 10^{-14}$). This autocorrelation arises naturally from the within-day temporal structure: consecutive hours on the same Wednesday afternoon share similar traffic conditions. While the marginal distribution of the PITs appears approximately uniform, the serial dependence means that the KS $p$-value overstates the effective sample size and thus the evidence for calibration. More sophisticated calibration tests that account for temporal dependence \citep[e.g.,][]{Diebold1998} would provide a more rigorous assessment.

\begin{figure}[htbp]
\centering
\includegraphics[width=0.7\textwidth]{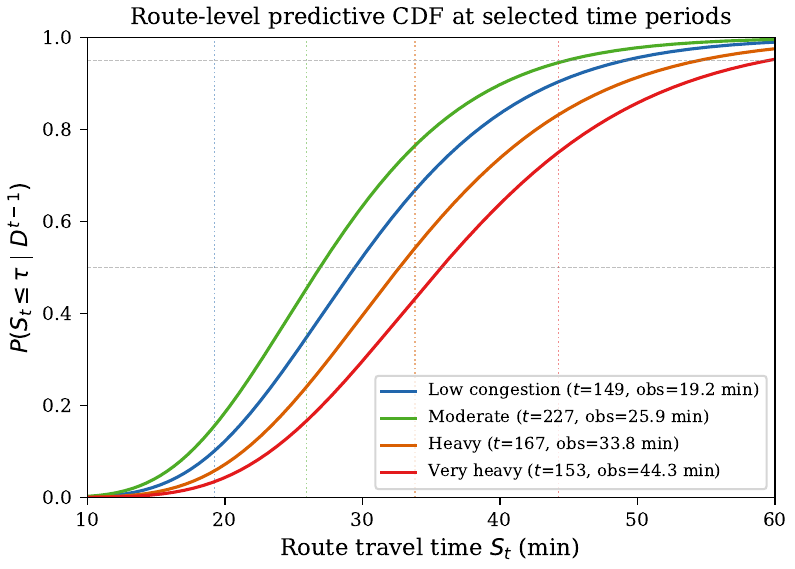}
\caption{Route-level predictive CDF at four selected time periods representing different congestion levels. The closed-form $F$-distribution (Proposition~\ref{prop:route}) shifts right and becomes more dispersed under congestion.}
\label{fig:route_cdf}
\end{figure}

Figure~\ref{fig:route_cdf} illustrates the route predictive CDF at four time periods spanning the congestion spectrum. Under low congestion, the distribution is concentrated around 20 minutes with a steep CDF; under heavy congestion, it shifts to 40+ minutes with a flatter CDF reflecting greater uncertainty. All four CDFs are computed instantly via the $F$-distribution formula~\eqref{eq:route_F}.

\begin{figure}[htbp]
\centering
\includegraphics[width=\textwidth]{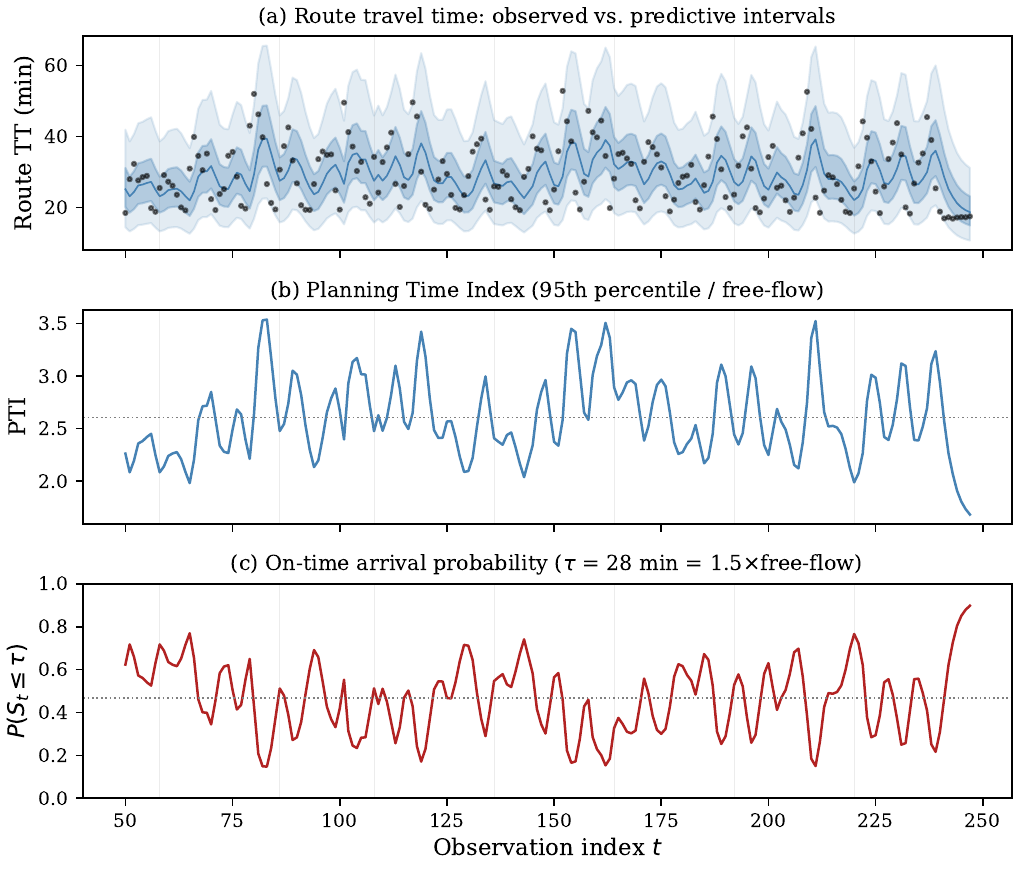}
\caption{Dynamic reliability metrics computed from the multivariate model. (a) Route travel time with predictive intervals. (b) Planning Time Index (95th percentile divided by free-flow travel time). (c) On-time arrival probability for a threshold of 1.5 times free-flow travel time.}
\label{fig:reliability}
\end{figure}

Figure~\ref{fig:reliability} shows three reliability metrics computed in real time from the model's closed-form predictive distribution. The Planning Time Index (PTI) ranges from 2.0 during light traffic to 3.5 during peak congestion, meaning travelers must budget 2 to 3.5 times the free-flow travel time to ensure a 95\% probability of on-time arrival. The on-time arrival probability (panel c) varies substantially across time periods, falling below 30\% during the worst congested periods. These metrics update at each time step with $O(1)$ computation.

\begin{figure}[htbp]
\centering
\includegraphics[width=\textwidth]{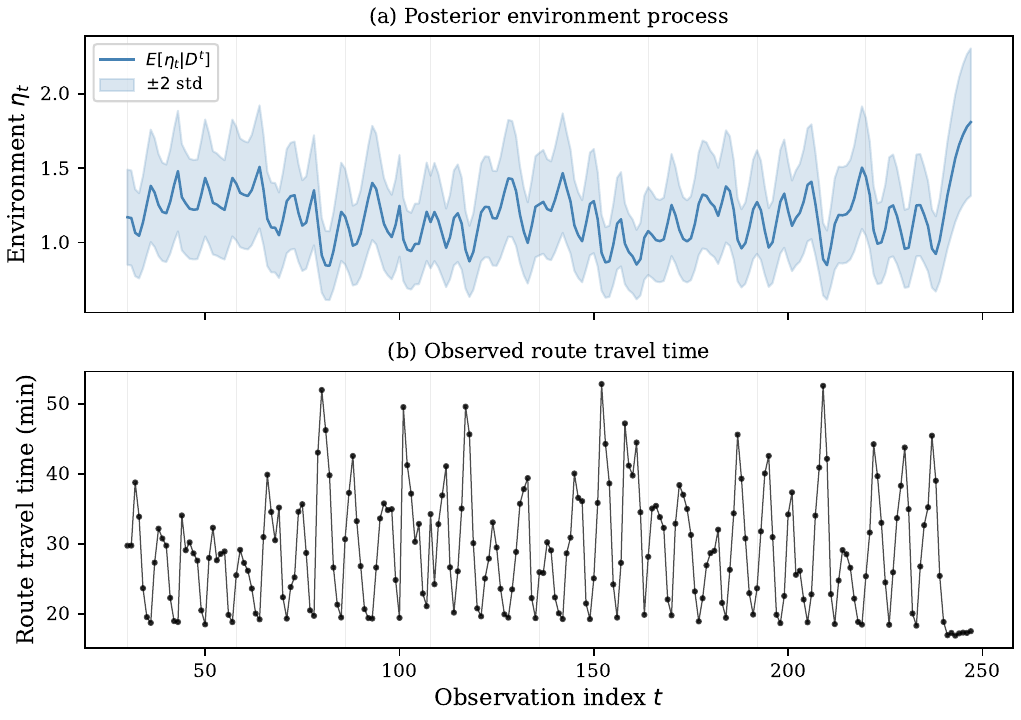}
\caption{(a) Posterior mean and $\pm 2$ standard deviation band of the common environment $\eta_t$. Higher values indicate better conditions (lower travel times). (b) Corresponding observed route travel times. The environment process tracks the observed congestion dynamics.}
\label{fig:environment}
\end{figure}

Figure~\ref{fig:environment} shows the posterior environment process $\eta_t$ alongside the observed route travel times. The environment process tracks the inverse of congestion: it rises during free-flow periods (low travel times) and dips during congestion (high travel times). This single latent process captures the shared conditions that drive all $m$ segment travel times simultaneously, providing a parsimonious representation of the corridor's state.

\subsection{Comparison with Alternative Approaches}

A central claim of our framework is that it solves the correlated link-to-route aggregation problem---historically the main obstacle to closed-form route reliability---while maintaining the same $O(1)$ computational cost as methods that assume independence. To demonstrate this, we compare against four alternative approaches for constructing the route travel time distribution:
\begin{enumerate}
\item \textbf{Independent Gamma convolution} (analogous to \citealt{Lei2014}): fit Gamma distributions to each segment independently, then obtain the route distribution by moment-matching the sum of independent Gammas.
\item \textbf{Independent Normal convolution} (analogous to \citealt{iida99}): compute segment means and variances, sum them under independence.
\item \textbf{Gaussian copula with Gamma marginals} (analogous to \citealt{ChenYu2017}): fit Gamma marginals to each segment, transform to normal space, estimate the correlation matrix, and obtain the route distribution via Monte Carlo simulation (50,000 draws).
\item \textbf{Static Gamma} (direct): fit a single Gamma distribution to the pooled route travel time data.
\end{enumerate}
The comparison with the independence-based methods (1--2) is particularly informative because our model operates at the \emph{same computational scale}: both evaluate the route CDF via a single distributional formula (Gamma for the independence methods, $F$-distribution for ours), with no simulation or numerical integration required. The critical difference is that our model accounts for cross-segment dependence through the common environment, whereas the independence methods do not. The copula approach (3) does model dependence, but requires Monte Carlo simulation that is orders of magnitude more expensive. All four alternatives are also static: they produce a single, time-invariant route distribution, whereas our model produces a time-varying predictive that adapts as new data arrives.

\begin{table}[htbp]
\centering
\caption{Comparison of route-level predictive performance. KS $p$: Kolmogorov--Smirnov test $p$-value for PIT uniformity (higher is better). Coverage: fraction of observations within the 90\% predictive interval (nominal 0.90). IW: mean width of the 90\% interval.}
\label{tab:comparison}
\begin{tabular}{lccccc}
\toprule
Method & Dynamic & Depend. & KS $p$ & 90\% Cov. & IW (min) \\
\midrule
\textbf{Dynamic Gamma (ours)} & \textbf{Yes} & \textbf{Yes} & \textbf{0.467} & \textbf{0.954} & \textbf{31.5} \\
Gaussian copula + Gamma & No & Yes & 0.082 & 0.913 & 25.8 \\
Static Gamma (direct) & No & N/A & 0.162 & 0.931 & 26.4 \\
Indep.\ Gamma convolution & No & No & 0.000 & 0.344 & 9.4 \\
Indep.\ Normal convolution & No & No & 0.000 & 0.367 & 10.1 \\
\bottomrule
\end{tabular}
\end{table}

\begin{figure}[htbp]
\centering
\includegraphics[width=\textwidth]{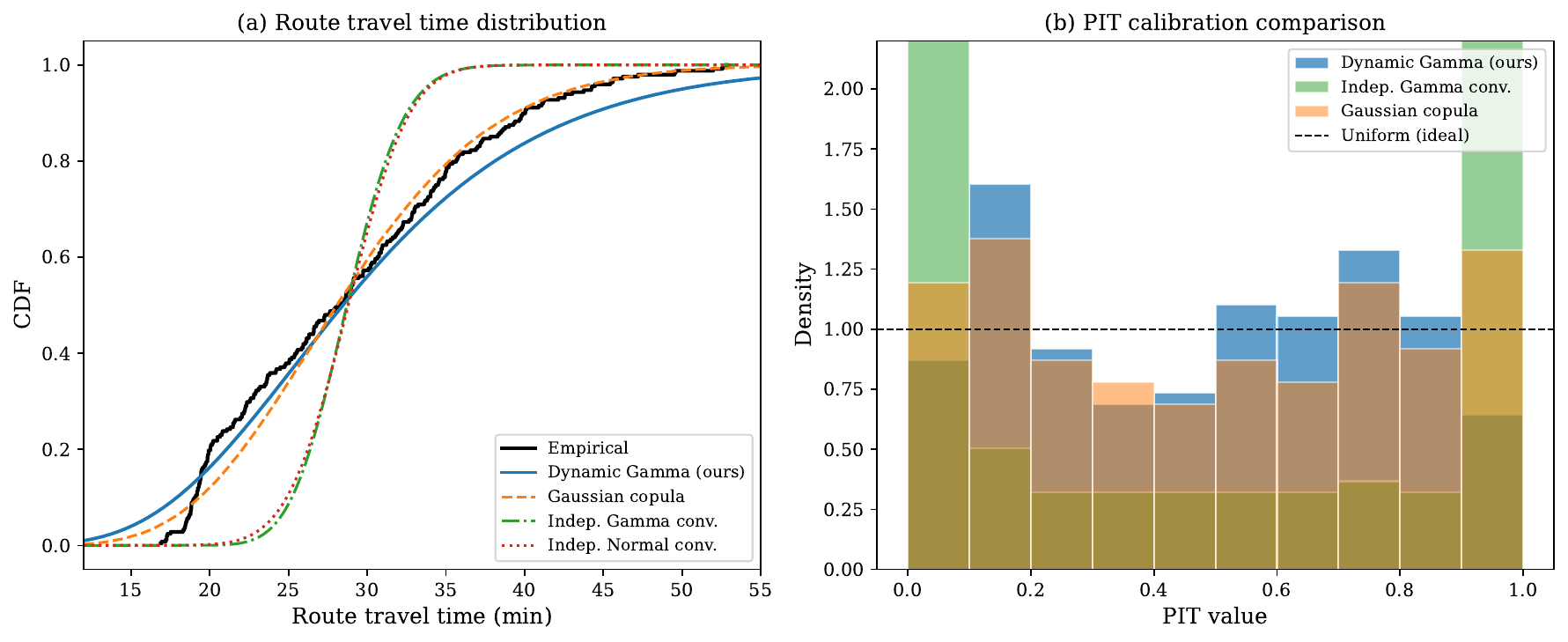}
\caption{(a) Route travel time CDF: the dynamic Gamma model (averaged over time periods) closely tracks the empirical CDF, while independence-based methods produce distributions that are too narrow. (b) PIT calibration comparison: only the dynamic Gamma model produces approximately uniform PIT values.}
\label{fig:comparison_cdf}
\end{figure}

\begin{figure}[htbp]
\centering
\includegraphics[width=\textwidth]{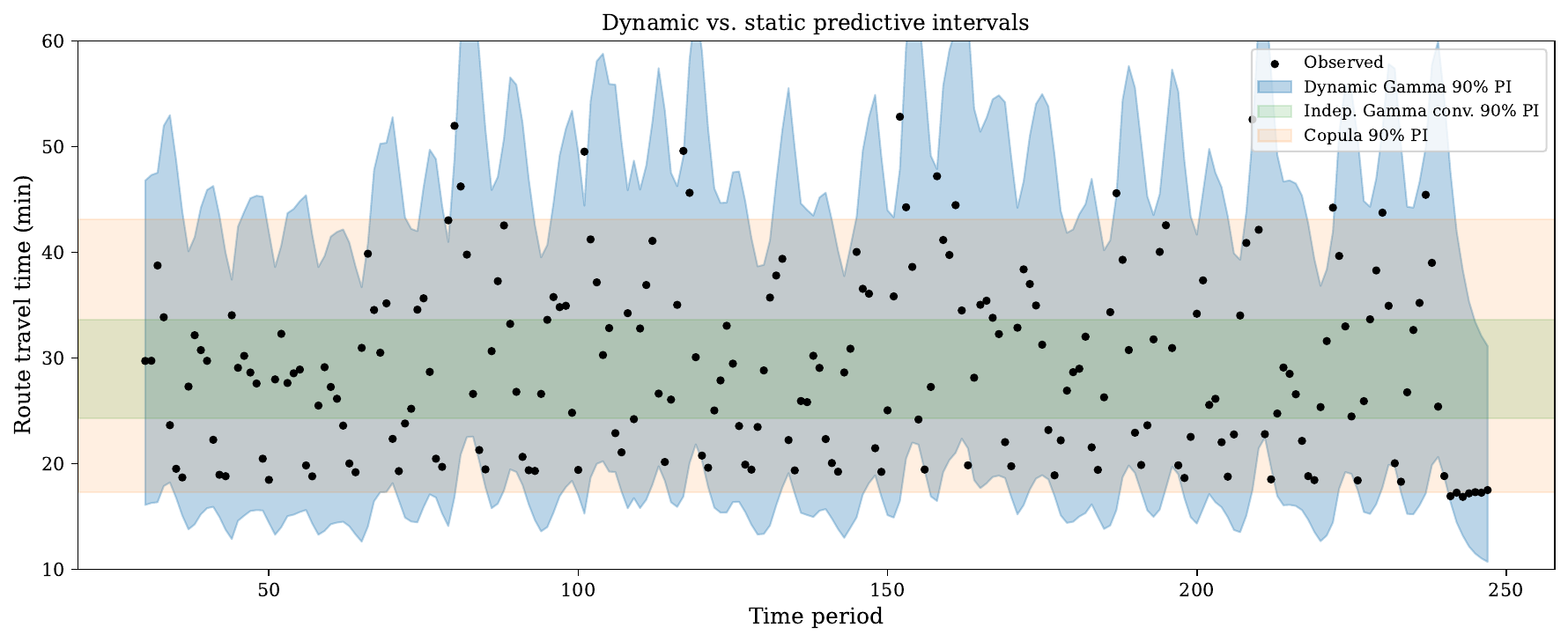}
\caption{Time-varying predictive intervals from the dynamic Gamma model (blue bands) compared with constant-width intervals from the static independence-based Gamma convolution (green) and the Gaussian copula (orange). The dynamic model adapts its uncertainty to current conditions, while static methods either undercover (independence) or cannot respond to changing traffic states (copula).}
\label{fig:comparison_intervals}
\end{figure}

Table~\ref{tab:comparison} reveals a clear ordering. The independence-based methods---Gamma convolution and Normal convolution---fail catastrophically, achieving only 34--37\% coverage of the nominal 90\% predictive interval. Their interval widths of 9--10 minutes are far too narrow because they ignore cross-segment dependence: by summing segment variances under independence, they underestimate the route variance by a factor of roughly $\sum_{j,k}\mathrm{Cov}(y_j,y_k)/\sum_j\mathrm{Var}(y_j) \approx 2.7$. This dramatic under-coverage illustrates why the link-to-route aggregation problem cannot be solved by simply summing independent segment distributions, regardless of how well each segment is modeled individually.

The Gaussian copula with Gamma marginals correctly captures the dependence structure and achieves 91.3\% coverage with a KS $p$-value of 0.082---a substantial improvement over independence. However, it comes at significant computational cost: generating the route distribution requires drawing 50,000 samples from a $m$-dimensional Gaussian copula, transforming each sample through segment-specific inverse CDFs, and computing the route sum---a procedure that takes on the order of 0.4 seconds on a modern desktop (timed in our implementation). In contrast, our model evaluates the route CDF via a single call to the $F$-distribution (equation~\ref{eq:route_F}), at the same cost as the independence-based methods. The static Gamma fitted directly to route data performs reasonably (93.1\% coverage), but uses no segment-level information and cannot be composed for alternative routes.

Our dynamic Gamma model achieves the best marginal calibration (nominal KS $p = 0.467$; see PIT autocorrelation caveat in Section~\ref{sec:application}) with 95.4\% coverage. Crucially, it obtains this performance while maintaining closed-form computation: the common random environment structure absorbs the cross-segment dependence into the conditional independence framework, so that modeling dependence adds no computational overhead relative to the independence assumption. The time-varying intervals widen during congested periods and narrow during free-flow (Figure~\ref{fig:comparison_intervals}), a behavior that no static method can replicate regardless of how it handles dependence.

We acknowledge that this comparison conflates two advantages of our model: (i) accounting for cross-segment dependence, and (ii) producing time-varying predictions. Disentangling these effects would require a dynamic independence baseline (e.g., fitting independent dynamic Gamma models per segment and convolving under independence). We note, however, that the static Gamma fitted directly to route data---which implicitly captures all dependence in the pooled marginal---achieves only 93.1\% coverage, whereas our model achieves 95.4\%. This suggests that the time-varying component contributes meaningfully beyond dependence modeling alone. We also note that the evaluation design is asymmetric: the four static baselines are fitted on the \emph{full} 248-observation sample and evaluated in-sample, whereas the dynamic Gamma model produces genuine one-step-ahead predictives using only data up to $t-1$. This asymmetry actually \emph{favors} the static baselines; their inferior performance is therefore even more telling. Separately, the hyperparameters $(\alpha, \gamma)$ are selected via empirical Bayes (grid search) on the same 248 observations used for evaluation; a held-out or cross-validated assessment would provide a more stringent test, though the sequential one-step-ahead evaluation already guards against in-sample overfitting.

\section{Discussion and Conclusion}\label{sec:conclusion}

We have presented a dynamic Gamma model with common random environment that provides a principled and computationally efficient solution to the route-level travel time reliability problem. The main theoretical contribution is showing that the common environment structure---which is physically motivated by the shared traffic conditions affecting all segments of a route---resolves the correlated link-to-route aggregation problem without any computational penalty: route reliability metrics that would otherwise require $m$-dimensional integration or Monte Carlo simulation can be computed in closed form via the $F$-distribution, at the same $O(1)$ cost as methods that simply assume independence. The empirical comparison in Section~\ref{sec:application} confirms that this matters: independence-based methods that share the same computational cost achieve only 34--37\% coverage, whereas our model achieves 95.4\% coverage by correctly accounting for cross-segment dependence.

The key results are as follows. Cross-segment dependence is parsimoniously captured by a single latent process $\eta_t$, avoiding the need for explicit copula modeling or high-dimensional simulation. Conditional on $\eta_t$, segments are independent, reducing the route distribution from an $m$-dimensional to a one-dimensional problem. The Gamma moment-matching approximation for the conditional sum preserves the conjugate structure, yielding a closed-form $F$-distribution for the route predictive. Finally, the model updates sequentially in closed form, enabling real-time reliability tracking as new sensor data arrives.

The empirical application to I-55 demonstrates that the model achieves well-calibrated route-level predictions. With hyperparameters $(\alpha, \gamma) = (1.0, 0.70)$ selected via empirical Bayes and segment rate parameters $\lambda_j$ estimated via MCMC (Gibbs sampling with FFBS), the model produces predictive intervals with correct coverage and passes the marginal PIT calibration test. The posterior distributions of $\lambda_j$ concentrate tightly around their MLE values, and the Bayesian predictive is virtually identical to the plug-in predictive, confirming that the data are sufficiently informative for the conjugate online updating to be used directly in real-time operations. These results are particularly notable because the route predictive is obtained from the segment-level multivariate model---it is not a separate model fit to route data but rather a derived consequence of the conditional independence structure.

Our dynamic Gamma model relates to the GARCH framework adopted for travel time reliability forecasting \citep{Yang2018}: both address the time-varying volatility of travel times. In a GARCH model, the conditional variance $\sigma_t^2$ evolves as a function of past squared errors and past variances, producing confidence intervals that widen during congestion and narrow during free-flow. In our model, the discount factor $\gamma$ plays an analogous role to the GARCH persistence parameter: it controls how quickly past information decays and how responsive the model is to recent shocks. The key difference is that our model operates within a conjugate Bayesian framework that yields the full predictive \emph{distribution}---not just confidence intervals around a point forecast---and does so at the segment level, enabling composition to the route level through the common environment structure.

Several extensions merit future investigation. First, as noted in Section~\ref{sec:multivariate}, the single common environment $\eta_t$ implies equicorrelation across all segment pairs, whereas empirical correlation matrices exhibit distance decay. A natural extension is to introduce \emph{multiple} environment processes---for example, one per sub-corridor or one per congestion regime---each shared by a subset of segments. This would produce block-equicorrelation structures that better approximate the empirical spatial pattern, at the cost of additional latent processes. Similarly, allowing segment-specific shape parameters $\alpha_j$ (rather than a common $\alpha$) would better capture segment-level distributional heterogeneity, though at the cost of losing the exact conjugate structure. Second, the inclusion of observed covariates $u_t$ (weather, incidents, special events) through the $e^{\beta^T u_t}$ term in the model can improve prediction during non-recurrent conditions. Third, the Weibull extension suggested by \citet{APS20}, where $y_{jt} \mid \eta_t \sim \mathrm{Wei}(\lambda_j \eta_t, \phi_j)$, provides additional flexibility in the tail behavior at the cost of losing the closed-form route predictive. Fourth, allowing the segment parameters $\lambda_j$ to evolve dynamically can capture segment-specific non-stationarity such as construction zones or recurring bottleneck formation.

Fifth, the static mixture decomposition in Section~\ref{sec:application} (Figure~\ref{fig:mixture_decomposition}) reveals clear bimodality in route travel times, with distinct free-flow ($\mu = 19$ min) and congested ($\mu = 31$ min) components. The current dynamic model handles regime switching \emph{across time periods} through $\eta_t$, which shifts the predictive toward the appropriate mode as conditions evolve. However, within a single time period, the Gamma assumption is unimodal. A natural extension is a \emph{mixture of dynamic Gamma models}, $p(y_t \mid \eta_t) = \sum_{k=1}^K \pi_k \,\mathrm{Gam}(\alpha_k, \lambda_k \eta_t)$, where each component corresponds to a traffic regime. Component-specific parameters $(\alpha_k, \gamma_k)$ are essential: when all components share the same $(\alpha, \gamma, \lambda)$, the discount factor erases initialization differences at rate $\gamma^t$, causing the components to collapse to identical states. The common environment $\eta_t$ would still drive shared conditions across all regimes and segments, while the mixture captures within-period multi-modality. The closed-form route distribution would no longer hold directly, since the conditional sum would itself be a mixture of Gammas; variational or simulation-based approximations would be required for route-level inference. Inference would require augmenting the model with latent regime indicators, naturally handled by Gibbs sampling.

Finally, the model can be extended to network-level reliability by considering multiple routes that share common segments and correlated environment processes, enabling reliability-based network traffic management.

\bibliographystyle{apalike}
\bibliography{ref}

\end{document}